\newcommand\xrecap[4]{{\noindent\textcolor{darkgray}{$\blacktriangleright$}\nobreakspace\sffamily\bfseries}{\bf #1 \ref{#3}}{ (#2)}{\bf.} \emph{#4}}
\newcommand\recap[3]{{\noindent\textcolor{darkgray}{$\blacktriangleright$}\nobreakspace\sffamily\bfseries}{\bf #1 \ref{#2}.} \emph{#3}}
\newcommand\T{\ensuremath{{\cal T}}}
\newcommand\hole[1]{\ensuremath{[]^{#1}}}
\newcommand\s[1]{\ensuremath{s({#1})}}
\newcommand\conj[1]{\ensuremath{\mathsf{conj}(#1)}}
\newcommand\PF[1]{\ensuremath{\mathsf{PF}(#1)}}
\newcommand\OC{System I\xspace}
\newcommand\interp[1]{\ensuremath{\left\llbracket{#1}\right\rrbracket}}
\newcommand\SN{\ensuremath{\mathsf{SN}}}
\newcommand\eq{\ensuremath{\rightleftarrows}}
\newcommand\re{\ensuremath{\hookrightarrow}}
\newcommand\toreq{\ensuremath{\rightsquigarrow}}
\newcommand\tf{\mbox{\bf T\!F}}
\newcommand\true{\mbox{\bf T}}
\newcommand\false{\mbox{\bf F}}
\newcommand\trtr{\mbox{\bf t\!t}}
\newcommand\ff{\mbox{\bf f\!f}}
\newcommand\ve[1]{#1}
\newcommand\cond[1]{\ensuremath{\scriptstyle[#1]\,}}
\newcommand\condi[2]{\ensuremath{{#1}{\vcenter{#2}}}}
\newcommand\rulelabel[1]{\mbox{\scriptsize\sc{#1}}}
\newcommand\numbertype[1]{\ensuremath{\mathbbm{#1}}}
\newcommand\nd{\ensuremath{\oplus}}
\newcommand\V{\mathcal V}
\title{Proof Normalisation in a Logic Identifying Isomorphic Propositions}
\author{Alejandro D\'iaz-Caro}{Instituto
  de Ciencias de la Computaci\'on (CONICET-Universidad de Buenos Aires) \and Universidad Nacional de Quilmes,
Argentina}{adiazcaro@icc.fcen.uba.ar}{https://orcid.org/0000-0002-5175-6882}{}
\author{Gilles Dowek}{Inria, LSV, ENS Paris-Saclay,
France}{gilles.dowek@ens-paris-saclay.fr}{https://orcid.org/0000-0001-6253-935X}{}
\authorrunning{A. D\'iaz-Caro and G. Dowek}
\keywords{Simply typed lambda calculus, Isomorphisms, Logic, Cut-elimination, Proof-reduction.}
\begin{document}

\maketitle

\begin{abstract}
  We define a fragment of propositional logic where isomorphic propositions,
  such as $A\wedge B$ and $B\wedge A$, or $A\Rightarrow(B\wedge
  C)$ and $(A\Rightarrow B)\wedge(A\Rightarrow C)$ are identified.
  We define \OC, a proof language for this logic, and prove its normalisation and
  consistency.
 \end{abstract}

\section{Introduction}\label{sec:intro}

\subsection{Identifying isomorphic propositions}

In mathematics, addition is associative and commutative, multiplication
distributes over addition, etc. In contrast, in logic conjunction is neither associative
nor commutative, implication does not distribute over conjunction, etc. For
instance, the propositions $A \wedge B$ and $B \wedge A$ are different: if $A
\wedge B$ has a proof, then so does $B \wedge A$, but if $\ve r$ is a proof of
$A \wedge B$, then it is not a proof of $B \wedge A$.

A first step towards considering $A \wedge B$ and $B \wedge A$ as the same
proposition has been made in
\cite{RittriCADE90,BruceDiCosmoLongoMSCS92,DiCosmo95,DiCosmoMSCS05}, where a
notion of isomorphic propositions has been defined: two propositions $A$ and $B$
are isomorphic if there exist two proofs of $A \Rightarrow B$ and $B
\Rightarrow A$ whose composition, in both ways, is the identity.

For the fragment of propositional logic restricted to the operations
$\Rightarrow$ and $\wedge$, all the isomorphisms are consequences of the
following four:
\begin{align}
  A\wedge B		  &\equiv B\wedge A \label{iso:comm}\\
  A\wedge (B\wedge C)	  &\equiv (A\wedge B)\wedge C \label{iso:asso}\\
  A\Rightarrow(B\wedge C)  &\equiv (A\Rightarrow B)\wedge(A\Rightarrow C) \label{iso:distrib}\\
  (A\wedge B)\Rightarrow C &\equiv A\Rightarrow B\Rightarrow C \label{iso:curry}
\end{align}
For example, $(A\Rightarrow B\Rightarrow C)\equiv (B\Rightarrow A\Rightarrow C)$
is a consequence of \eqref{iso:curry}  and \eqref{iso:comm} \cite{BruceDiCosmoLongoMSCS92}.

In this paper, we go one step further and define a proof language, \OC, for the
fragment $\Rightarrow$, $\wedge$, such that when $A \equiv B$, then any proof of
$A$ is also a proof of $B$, so the propositions $A \wedge B$ and $B \wedge A$,
for instance, are really identical, as they have the same proofs.

The idea of identifying some propositions has already been investigated, for
example, in Martin-L\"of's type theory \cite{MartinLof84}, in the Calculus of
Constructions \cite{CoquandHuetIC88}, and in Deduction modulo theory
\cite{DowekHardinKirchnerJAR03,DowekWernerJSL98}, where definitionally
equivalent propositions, for instance $A \subseteq B$, $A \in\mathcal P(B)$, and
$\forall x~(x \in A \Rightarrow x \in B)$ can be identified. But definitional
equality does not handle isomorphisms. For example, $A \wedge B$ and
$B \wedge A$ are not identified in these logics.
Beside definitional equality, identifying isomorphic types in type theory, is
also a goal of the univalence axiom \cite{HoTT}.

Isomorphisms make proofs more natural. For instance, to prove
$(A\wedge(A\Rightarrow B))\Rightarrow B$ in natural deduction we need to
introduce conjunctive hypothesis $A\wedge (A\Rightarrow B)$ which has to be
decomposed into $A$ and $A\Rightarrow B$, while using the
isomorphism~\eqref{iso:curry} allows to transform the goal to $A\Rightarrow
(A\Rightarrow B)\Rightarrow B$ and introduce directly the hypotheses $A$ and
$A\Rightarrow B$, eliminating completely the need for conjunctive hypotheses.

\subsection{Lambda-calculus}

The proof-language of the fragment of propositional logic restricted to the
operations $\Rightarrow$ and $\wedge$ is simply typed lambda-calculus extended
with Cartesian product. So, \OC is an extension of this calculus
where, for example, a pair of functions $\langle\ve r,\ve s \rangle$ of type
$(A\Rightarrow B)\wedge(A\Rightarrow C)\equiv A\Rightarrow(B\wedge C)$ can be
applied to an argument $\ve t$ of type $A$, yielding a term $\langle\ve r,\ve
s\rangle\ve t$ of type $B\wedge C$. For example, the term $\langle \lambda x^\tau.x,\lambda x^\tau.x
\rangle y$ has type $\tau\wedge\tau$. With the usual reduction rules of lambda
calculus with pairs, such a term would be normal, but we can also extend the
reduction relation, with an equation $\langle r,s \rangle t\eq \langle rt,st
\rangle$, such that this term is equivalent to $\langle (\lambda
x^\tau.x)y,(\lambda x^\tau.x)y \rangle$ and thus reduces to $\langle
y,y\rangle$. Taking too many of such equations may lead to non termination
(Section~\ref{sec:deltadelta}), and taking too few multiplies undesired normal
forms. The choice of the rules in this paper is motivated by the goal to have
both termination of reduction (Section~\ref{sec:SN}) and consistency
(Section~\ref{sec:cons}), that is, no normal closed term of atomic types.

To stress the associativity and commutativity of the
notion of pair, we write $\ve r\times\ve s$ instead of $\langle\ve r,\ve s\rangle$ and thus write this equivalence as
\[
  (\ve r\times\ve s)\ve t\eq\ve r\ve t\times\ve s\ve t
\]
Several similar equivalence rules on terms are introduced: one related to the
isomorphism \eqref{iso:comm}, the commutativity of the conjunction, $\ve r\times
\ve s\eq \ve s\times \ve r$; one related to the isomorphism \eqref{iso:asso},
the associativity of the conjunction, $ (\ve r\times \ve s)\times \ve t\eq \ve
r\times (\ve s\times \ve t) $; two to the isomorphism \eqref{iso:distrib}, the
distributivity of implication with respect to conjunction,
$\lambda x.(r\times s)\eq\lambda x.r\times\lambda x.s$
and
$ (\ve r\times \ve s)\ve t\eq \ve r\ve t\times \ve s\ve t $;
and one related to the isomorphism \eqref{iso:curry}, the currification, $\ve r\ve s\ve t\eq\ve r(\ve
s\times \ve t)$.

One of the difficulties in the design of \OC is the design of the
elimination rule for the conjunction. A rule like ``if $\ve r:A\wedge B$ then $\pi_1(\ve r):A$'', would not be consistent.
Indeed, if $A$ and $B$ are two arbitrary types, $\ve s$ a term of type $A$ and
$\ve t$ a term of type $B$, then $\ve s\times \ve t$ has both type $A\wedge B$
and type $B \wedge A$, thus $\pi_1(\ve s \times \ve t)$ would have both type $A$ and
type $B$. A solution is to consider explicitly typed (Church
style) terms, and parametrise the projection by the type: if $\ve
r:A\wedge B$ then $\pi_A(\ve r):A$ and the reduction rule is then
that $\pi_A(\ve s \times \ve t)$ reduces to $\ve s$ if $\ve s$ has type $A$.

This rule makes reduction non-deterministic. Indeed, in the particular case
where $A$ happens to be equal to $B$, then both $\ve s$ and $\ve t$ have type
$A$ and $\pi_A(\ve s \times \ve t)$ reduces both to $\ve s$ and to $\ve t$.
Notice that, although this reduction rule is non-deterministic, it preserves
typing, like the calculus developed in \cite{DowekJiangIC11}, where the
reduction is non-deterministic, but verifies subject reduction.

\subsection{Non-determinism}

Therefore, \OC is one of the many non-deterministic calculi in the sense, for
instance,
of~\cite{BoudolIC94,BucciarelliEhrhardManzonettoAPAL12,deLiguoroPipernoIC95,DezaniciancagliniDeliguoroPipernoSIAM98,PaganiRonchidellaroccaFI10}
and our pair-construction operator $\times$ is also the parallel composition
operator of a non-deterministic calculus.

In non-deterministic calculi, the non-deterministic choice is such that if $\ve
r$ and $\ve s$ are two $\lambda$-terms, the term $\ve r\nd \ve s$ represents the
computation that runs either $\ve r$ or $\ve s$ non-deterministically, that is
such that $(\ve r \nd \ve s)\ve t$ reduces either to $\ve r\ve t$ or $\ve s\ve
t$. On the other hand, the parallel composition operator $|$ is such that the
term $(\ve r~|~\ve s)\ve t$ reduces to $\ve r\ve t~|~\ve s\ve t$ and continue
running both $\ve r\ve t$ and $\ve s\ve t$ in parallel. In our case, given $\ve
r$ and $\ve s$ of type $A\Rightarrow B$ and $\ve t$ of type $A$, the term
$\pi_B((\ve r\times\ve s)\ve t)$ is equivalent to $\pi_B(\ve r\ve t\times\ve
s\ve t)$, which reduces to $\ve r\ve t$ or $\ve s\ve t$, while the term $\ve
r\ve t\times\ve s\ve t$ itself would run both computations in parallel.
Hence, our $\times$ is equivalent to the parallel composition while the
non-deterministic choice $\oplus$ is decomposed into $\times$ followed by $\pi$.

In \OC, the non-determinism comes from the interaction of two operators,
$\times$ and $\pi$. This is similar to quantum computing where the
non-determinism comes from the interaction of two operators, the fist allowing
to build a superposition, that is a linear combination, of two terms $\alpha.\ve
r+\beta.\ve t$, and the measurement operator $\pi$. In addition, in such
calculi, the distributivity rule $(\ve r+\ve s)\ve t\eq\ve r\ve t+\ve s\ve t$ is
seen as the point-wise definition of the sum of two functions.

More generally, the calculus developed in this paper is also related to the
algebraic
calculi~\cite{ArrighiDiazcaroLMCS12,ArrighiDiazcaroValironIC17,ArrighiDowekRTA08,
ArrighiDowekLMCS17,VauxMSCS09,DiazcaroPetitWoLLIC12,DiazcaroDowekTPNC17}, some
of which have been designed to express quantum algorithms. There is a clear link
between the pair constructor $\times$ and the projection $\pi$, with the
superposition constructor $+$ and the measurement $\pi$ on these calculi. In
these cases, the pair $\ve s + \ve t$ is not interpreted as a non-deterministic
choice, but as a superposition of two processes running $\ve s$ and $\ve t$, and
the operator $\pi$ is the projection related to the measurement, which is the
only non-deterministic operator.

\subsection*{Outline}
In Section~\ref{sec:eqTypes}, we define the notion of type isomorphism and prove
elementary properties of this relation.
In Section~\ref{sec:calculus}, we introduce \OC.
In Section~\ref{sec:SR}, we prove its subject reduction.
In Section~\ref{sec:SN}, we prove its strong normalisation.
In Section~\ref{sec:cons}, we prove its consistency.
Finally, in Section~\ref{sec:computing}, we discuss how \OC could be used as a programming language.

\section{Type isomorphisms}\label{sec:eqTypes}
\subsection{Types and isomorphisms}
Types are defined by the following grammar
\[
  A,B,C,\dots\ ::=\quad \tau~|~A\Rightarrow B~|~A\wedge B
\]
where $\tau$ is the only atomic type.
\begin{definition}[Size of a type]
  The size of a type is defined as usual by
  \begin{align*}
    \s{\tau} &= 1\\ 
    \s{A\Rightarrow B} &=\s{A}+\s{B}+1\\
    \s{A\wedge B} &=\s{A}+\s{B}+1
  \end{align*}
\end{definition}

\begin{definition}
  [Congruence]
The isomorphisms \eqref{iso:comm}, \eqref{iso:asso}, \eqref{iso:distrib}, and
\eqref{iso:curry} define a congruence on types.
\begin{align*}
  A\wedge B		  &\equiv B\wedge A \tag{1}\\
  A\wedge (B\wedge C)	  &\equiv (A\wedge B)\wedge C \tag{2}\\
  A\Rightarrow(B\wedge C)  &\equiv (A\Rightarrow B)\wedge(A\Rightarrow C) \tag{3}\\
  (A\wedge B)\Rightarrow C &\equiv A\Rightarrow B\Rightarrow C \tag{4}
\end{align*}
\end{definition}

\subsection{Prime factors}\label{sec:PF}
\begin{definition}[Prime types]
  A prime type is a type of the form $C_1\Rightarrow\dots\Rightarrow C_n\Rightarrow\tau$, with $n\geq 0$.
\end{definition}

A prime type is equivalent to $(\bigwedge_{i=1}^nC_i)\Rightarrow\tau$, which is
either equivalent to $\tau$ or to $C\Rightarrow\tau$, for some $C$.
For uniformity, we may write
$\emptyset\Rightarrow\tau$ for $\tau$.

We now show that each type can be decomposed into a conjunction of prime types.
We use the notation $[A_i]_{i=1}^n$ for the multiset whose elements are $A_1,\dots,A_n$. We may write $[A_i]_i$ when the number of elements is not important.
If $R=[A_i]_i$ is a multiset of types, then $\conj R=\bigwedge_i A_i$.
\begin{definition}
  We write $[A_1,\dots,A_n]\sim[B_1,\dots,B_m]$ if $n=m$ and $B_i\equiv A_i$.
\end{definition}
\begin{definition}[Prime factors]
  The multiset of prime factors of a type $A$ is inductively defined as follows
  \begin{align*}
    \PF\tau &=[\tau]\\
    \PF{A\Rightarrow B} &= [(A\wedge C_i)\Rightarrow\tau]_{i=1}^n
                          \quad\textrm{ where }[C_i\Rightarrow\tau]_{i=1}^n=\PF B\\
    \PF{A\wedge B} &=\PF A\uplus\PF B
  \end{align*}
  with the convention that $A\wedge\emptyset=A$.
\end{definition}
Note that if $B\Rightarrow\tau\in\PF A$, then $s(B)<s(A)$.

\begin{lemma}\label{lem:eqConjPF}
  For all $A$, $A\equiv\conj{\PF A}$. 
\end{lemma}
\begin{proof}
  By induction on \s A.
  \begin{itemize}
  \item If $A=\tau$, then $\PF\tau=[\tau]$, and so $\conj{\PF\tau}=\tau$.
  \item If $A=B\Rightarrow C$, then $\PF A = [(B\wedge C_i)\Rightarrow\tau]_i$,
    where $[C_i\Rightarrow\tau]_i=\PF C$. By the induction hypothesis, $C\equiv\bigwedge_i(C_i\Rightarrow\tau)$,
    hence, $A = B\Rightarrow C\equiv B\Rightarrow\bigwedge_i (C_i\Rightarrow\tau)\equiv
    \bigwedge_i(B\Rightarrow C_i\Rightarrow\tau)\equiv\bigwedge_i((B\wedge C_i)\Rightarrow\tau)$.
  \item If $A=B\wedge C$, then $\PF{A}=\PF{B}\uplus\PF{C}$. By the induction
    hypothesis, $B\equiv\conj{\PF B}$,
    and $C\equiv\conj{\PF C}$.
    Therefore,
    $A
    =B\wedge C
    \equiv \conj{\PF B}\wedge\conj{\PF C}\equiv\conj{\PF{B\wedge C}}
    \equiv\conj{\PF B\uplus\PF C}
    =\conj{\PF A}$.
    \qedhere
  \end{itemize}
\end{proof}

\begin{lemma}\label{lem:eqPF}
  If $A\equiv B$, then $\PF A\sim\PF B$.
\end{lemma}
\begin{proof}
  First we check that $\PF{A\wedge B}\sim\PF{B\wedge A}$ and similar for the other three isomorphisms.
  Then we prove by structural induction that if $A$ and $B$ are equivalent in
  one step, then $\PF A\sim\PF B$.
  We conclude by an induction on the length of the derivation of the
  equivalence $A\equiv B$.
\end{proof}

\subsection{Measure of types}
The size of a type is not preserved by equivalence.
For instance, $
\tau\Rightarrow(\tau\wedge\tau)
\equiv
(\tau\Rightarrow\tau)\wedge(\tau\Rightarrow\tau)
$, but
$\s{\tau\Rightarrow(\tau\wedge\tau)}=5$ and $
\s{(\tau\Rightarrow\tau)\wedge(\tau\Rightarrow\tau)}=7$. Thus, we define
another notion of measure of a type.

\begin{definition}[Measure of a type]
The measure of a type is defined as follows
\[
  m(A) = \sum_i(m(C_i)+1)\quad\textrm{ where }[C_i\Rightarrow\tau]_i=\PF A
\]
with the convention that $m(\emptyset)=0$.
\end{definition}

\begin{lemma}
  If $A\equiv B$, then $m(A)=m(B)$.
\end{lemma}
\begin{proof}
  By induction on $s(A)$. Let $\PF A = [C_i\Rightarrow\tau]_i$ and $\PF B =
[D_j\Rightarrow\tau]_j$. By Lemma~\ref{lem:eqPF},
$[C_i\Rightarrow\tau]_i\sim[D_i\Rightarrow\tau]_i$. Without lost of generality,
take $C_i\equiv D_i$. By the induction hypothesis, $m(C_i)=m(D_i)$. Then,
$m(A)=\sum_i(m(C_i)+1)=\sum_i(m(D_i)+1)=m(B)$.
\end{proof}

The following lemma shows that the measure $m(A)$ verifies the usual
properties.

\begin{lemma}
  \label{lem:comp}~
  \begin{enumerate}
  \item $m(A\wedge B)>m(A)$
  \item $m(A\Rightarrow B)>m(A)$
  \item $m(A\Rightarrow B)>m(B)$
  \end{enumerate}
\end{lemma}
\begin{proof}~
  \begin{enumerate}
  \item $\PF A$ is a strict submultiset of $\PF{A\wedge B}$.
  \item Let $\PF B=[C_i\Rightarrow\tau]_{i=1}^n$. Then, $\PF{A\Rightarrow
      B}=[(A\wedge C_i)\Rightarrow\tau]_{i=1}^n$. Hence, $m(A\Rightarrow
    B)\geq m(A\wedge C_1)+1>m(A\wedge C_1)\geq m(A)$.
  \item $m(A\Rightarrow B) = \sum_i m(A \wedge C_i) + 1 > \sum_i m(C_i) + 1 = m(B)$.
    \qedhere
  \end{enumerate}
\end{proof}

\subsection{Decomposition properties on types}\label{sec:DPT}
In simply typed lambda calculus, the implication and the conjunction are
constructors, that is $A\Rightarrow B$ is never equal to $C\wedge D$, if
$A\Rightarrow B=A'\Rightarrow B'$, then $A=A'$ and $B=B'$, and the same holds
for the conjunction. This is not the case in \OC, where $\tau\Rightarrow (\tau\wedge\tau)\equiv
(\tau\Rightarrow\tau)\wedge(\tau\Rightarrow\tau)$, but the connectors still have
some coherence properties:
\begin{itemize}
 \item If $A\Rightarrow B\equiv\bigwedge_{i=1}^n C_i$, then each $C_i$ is
   equivalent to an implication $A\Rightarrow B_i$, where the conjunction of the
   $B_i$ is equivalent to $B$.
  \item If $A\wedge B\equiv\bigwedge_iC_i$, then each $C_i$ is a conjunction of
    elements, possibly empty, that contribute to $A$ and to $B$.
\end{itemize}
We state these properties in Corollary~\ref{cor:ImpConj} and
Lemma~\ref{lem:eqConjN}. 

\begin{lemma}
  \label{lem:ImpConj}
  If $A\Rightarrow B\equiv C_1\wedge C_2$, then $C_1\equiv A\Rightarrow B_1$
  and $C_2\equiv A\Rightarrow B_2$ where
  $B\equiv B_1\wedge B_2$.
\end{lemma}
\begin{proof}
  By Lemma~\ref{lem:eqPF}, $\PF{A\Rightarrow B}\sim\PF{C_1\wedge
    C_2}=\PF{C_1}\uplus\PF{C_2}$.
  Let $\PF B=[D_i\Rightarrow\tau]_{i=1}^n$, so $\PF{A\Rightarrow B}=[(
  A\wedge D_i)\Rightarrow\tau]_{i=1}^n$.
  Without lost of generality, take $\PF{C_1}\sim[(A\wedge
  D_i)\Rightarrow\tau]_{i=1}^k$ and $\PF{C_2}\sim[(A\wedge
  D_i)\Rightarrow\tau]_{i=k+1}^n$. Therefore, by Lemma~\ref{lem:eqConjPF}, we have
  $A\Rightarrow
  B\equiv\bigwedge_{i=1}^k((A\wedge{D_i})\Rightarrow\tau)\wedge\bigwedge_{i=k+1}^n((A\wedge{D_i})\Rightarrow\tau)\equiv
  (A\Rightarrow\bigwedge_{i=1}^k (D_i\Rightarrow\tau))\wedge(A\Rightarrow\bigwedge_{i=k+1}^n (D_i\Rightarrow\tau))$.
  Take $B_1=\bigwedge_{i=1}^k{D_i}\Rightarrow\tau$ and $B_2=\bigwedge_{i=k+1}^n{D_i}\Rightarrow\tau$.
  Remark that $C_1 \equiv A\Rightarrow B_1$, $C_2\equiv A\Rightarrow B_2$ and
  $B\equiv B_1\wedge B_2$.
\end{proof}

\begin{corollary}\label{cor:ImpConj}
  If $A\Rightarrow B\equiv\bigwedge_{i=1}^n C_i$, then for $i\in\{1,\dots,n\}$, we have $C_i\equiv A\Rightarrow B_i$
  where $B\equiv\bigwedge_{i=1}^nB_i$.
\end{corollary}
\begin{proof}
  By induction on $n$.
  By Lemma~\ref{lem:ImpConj}, $\bigwedge_{i=1}^{n-1}C_i\equiv A\Rightarrow B'$ and
  $C_n\equiv A\Rightarrow B_n$, with $B\equiv B'\wedge B_n$. By
  the induction hypothesis, for $i\leq n-1$, $C_i\equiv A\Rightarrow B_i$ where
  $B'\equiv\bigwedge_{i=1}^{n-1}B_i$. Hence, $B\equiv\bigwedge_{i=1}^nB_i$.
\end{proof}

\begin{lemma}\label{lem:interMultiset}
  Let $R,S,T$ and $U$ be four multisets such that $R\uplus S=T\uplus U$, then
  there exist four multisets $V$, $W$, $X$, and $Y$ such that
  $R = V\uplus X$,
  $S = W\uplus Y$,
  $T = V\uplus W$,
  and $U = X\uplus Y$,
  cf.~Figure~\ref{fig:lemma}.
\end{lemma}
\begin{proof}
  Consider an element $a\in R\uplus S=T\uplus U$. Let $r$ be the multiplicity of
  $a$ in $R$, $s$ its multiplicity in $S$, $t$ its multiplicity in $T$, and $u$ its
  multiplicity in $U$. We have $r+s=t+u$.
  If $r\leq t$ we put $r$ copies of $a$ in $V$, $t-r$ in $W$, $0$ in $X$, and $u$
  in $Y$. Otherwise, we put $t$ in $V$, $0$ in $W$, $r-t$ in $X$, and $s$ in $Y$.
\end{proof}

\begin{figure}
  \begin{center}
  \begin{tikzpicture}
    \draw (-1,0) rectangle (-.5,.5) node[pos=0.5]{$W$};
    \draw (-.45,0) rectangle (.05,.5) node[pos=0.5]{$Y$};
    \draw (-1,0.55) rectangle (-.5,1.05) node[pos=0.5]{$V$};
    \draw (-0.45,0.55) rectangle (.05,1.05) node[pos=0.5]{$X$};
    \node at (-1.3,0.8) {$R$};
    \node at (-1.3,0.25) {$S$};
    \node at (-.75,1.3) {$T$};
    \node at (-.2,1.3) {$U$};
    \draw (.8,-.2) -- (.8,1.3);
    \draw (1.5,0) rectangle (2,.5) node[pos=0.5]{$W_1$};
    \draw (2.05,0) rectangle (2.55,.5) node[pos=0.5]{$W_2$};
    \draw (1.5,0.55) rectangle (2,1.05) node[pos=0.5]{$V_1$};
    \draw (2.05,0.55) rectangle (2.55,1.05) node[pos=0.5]{$V_2$};
    \draw (4.05,0) rectangle (4.55,.5) node[pos=0.5]{$W_n$};
    \draw (4.05,0.55) rectangle (4.55,1.05) node[pos=0.5]{$V_n$};
    \node at (1.2,0.8) {$R$};
    \node at (1.2,0.25) {$S$};
    \node at (1.75,1.3) {$T_1$};
    \node at (2.3,1.3) {$T_2$};
    \node at (4.3,1.3) {$T_n$};
    \draw[dotted] (2.8,0.8) -- (3.8,.8);
    \draw[dotted] (2.8,0.25) -- (3.8,.25);
  \end{tikzpicture}
\end{center}
\caption{}
  \label{fig:lemma}
\end{figure}

\begin{corollary}\label{cor:interMultiset}
  Let $R$ and $S$ be two multisets and $(T_i)_{i=1}^n$ be a family of multisets, such that $R\uplus S=\biguplus_{i=1}^nT_i$. Then, there exist multisets
  $V_1,\dots,V_n,W_1,\dots,W_n$ such that
  $R=\biguplus_i V_i$ and $S=\biguplus_i W_i$ and for each $i$, $T_i=V_i\uplus
  W_i$, cf.~Figure~\ref{fig:lemma}.
\end{corollary}
\begin{proof}
  By induction on $n$.
  We have $R\uplus S=\biguplus_{i=1}^{n-1}T_i\uplus T_n$. Then, by
  Lemma~\ref{lem:interMultiset}, there exist $R', S', V_n, W_n$ such that
  $R = R'\uplus V_n$,
  $S = S'\uplus W_n$,
  $\biguplus_{i=1}^{n-1}T_i = R'\uplus S'$,
  and
  $T_n = V_n\uplus W_n$.
  By induction hypothesis, there exist
  $V_1,\dots, V_{n-1}$ and $W_1,\dots, W_{n-1}$ such that
  $R'=\biguplus_{i=1}^{n-1}V_i$, $S'=\biguplus_{i=1}^{n-1}W_i$ and each $T_i=V_i\uplus W_i$. Hence, $R=\biguplus_{i=1}^n V_i$ and $S=\biguplus_{i=1}^n W_i$.
\end{proof}

\begin{lemma}
  \label{lem:eqConjN}
  If $A\wedge B\equiv\bigwedge_{i=1}^nC_i$ then there exists a partition
  $E\uplus F\uplus G$ of $\{1,\dots,n\}$ such that 
  \begin{itemize}
    \item $C_i = A_i\wedge B_i$, when $i\in E$;
    \item $C_i=A_i$, when $i\in F$;
    \item $C_i=B_i$, when $i\in G$;
    \item $A = \bigwedge_{i\in E\uplus F}A_i$; and
    \item $B = \bigwedge_{i\in E\uplus G}B_i$.
      \qed
  \end{itemize}
\end{lemma}
\begin{proof}
  Let $R = \PF{A}$, $S = \PF{B}$, and $T_i = \PF{C_i}$. By Lemma~\ref{lem:eqPF},
we have $\PF{A \wedge B}\sim\PF{\bigwedge_iC_i}$, that is $R\uplus
S\sim\biguplus_iT_i$. By Corollary~\ref{cor:interMultiset}, there exist $V_i$
and $W_i$ such that $R=\biguplus_{i=1}^n V_i$, $S=\biguplus_{i=1}^n
W_i$, and $T_i \sim V_i\uplus W_i$. As $T_i$ is non-empty, $V_i$ and $W_i$
cannot be both empty.
\begin{itemize}
\item If $V_i$ and $W_i$ are both non-empty, we let $i\in E$ and
  $A_i=\conj{V_i}$ and $B_i=\conj{W_i}$.
  By Lemma~\ref{lem:eqConjPF}, $C_i\equiv \conj{T_i}\equiv\conj{V_i\uplus W_i}\equiv
  A_i\wedge B_i$.
\item If $V_i$ is non-empty and $W_i$ is empty, we let $i\in F$, and
  $A_i=\conj{V_i}\equiv\conj{T_i} \equiv C_i$.
\item If $W_i$ is non-empty and $V_i$ is empty, we let $i\in G$, and
  $B_i=\conj{W_i}\equiv \conj{T_i} \equiv C_i$.
\end{itemize}

As $V_i=\emptyset$ when $i\in G$, we have $A\equiv\conj R\equiv\conj{\biguplus_{i\in E\uplus F}V_i}\equiv\bigwedge_{i\in E\uplus F}A_i$.

As $W_i=\emptyset$ when $i\in F$, we have $B\equiv\conj S\equiv\conj{\biguplus_{i\in E\uplus
    G}W_i}\equiv\bigwedge_{i\in E\uplus G}B_i$.
\end{proof}

\section{\OC}\label{sec:calculus}
\subsection{Syntax}
We associate to each (up to equivalence) prime type $A$ an infinite set of variables $\V_A$ such that if
$A\equiv B$ then $\V_A=\V_B$ and if $A\not\equiv B$ then
$\V_A\cap\V_B=\emptyset$.
The set of terms is defined inductively by the grammar
\[
  \ve r,\ve s,\ve t,\dots\ ::=\quad x~|~\lambda x.\ve r~|~\ve r\ve s~|~\ve
  r\times\ve s~|~\pi_A(\ve r)
\]
We recall the type on binding occurrences of variables and write $\lambda x^A.t$
for $\lambda x.t$ when $x\in\V_A$. $\alpha$-equivalence and substitution are
defined as usual. The type system is given in Table~\ref{tab:typeSys}. We use a
presentation of typing rules without explicit context
following~\cite{GeuversKrebbersMcKinnaWiedijkLFMTP10,ParkSeoParkLeeJAR13}, hence
the typing judgments have the form $\ve r:A$. The preconditions of a
typing rule is written on its left.

\begin{table}
      \[
        \condi{\cond{x\in\V_A}}{\infer[^{(ax)}]{x:A}{\phantom{x:A}}}
        \qquad
        \condi{\cond{A\equiv B}}{\infer[^{(\equiv)}]{\ve r:B}{\ve r:A}}
      \]
      \[
        \infer[^{(\Rightarrow_i)}]{\lambda x^A.\ve r:A\Rightarrow
          B}{\ve r:B}
        \qquad
        {\infer[^{(\Rightarrow_e)}]{\ve r\ve s:B}{\ve r:A\Rightarrow B & \ve s:A}}
        \qquad
        \infer[^{(\wedge_i)}]{\ve r\times\ve s:A\wedge B}{\ve r:A & \ve s:B}
        \qquad
        \infer[^{(\wedge_e)}]{\pi_A(\ve r):A}{\ve r:A\wedge B}
      \]
      \caption{The type system.}
  \label{tab:typeSys}
\end{table}

\subsection{Operational semantics}\label{sec:opSem}
The operational semantics of the calculus is defined by two relations: an
equivalence relation, and a reduction relation.

\begin{definition}
The symmetric relation $\eq$ is the smallest contextually closed relation
defined by the rules given in
Table~\ref{tab:opSemSym}.
\end{definition}
Each isomorphism induces an equivalence between terms. Two rules however
correspond to the isomorphism~\eqref{iso:distrib}, depending on which distribution
is taken into account: elimination or introduction of implication.
We write $\eq^*$ for the transitive and reflexive closure of $\eq$. Note that
$\eq^*$ is an equivalence relation.

\begin{table}
        \begin{align*}
          \ve r\times \ve s &\eq \ve s\times \ve r & \rulelabel{(comm)}\\
          (\ve r\times \ve s)\times \ve t &\eq \ve r\times (\ve s\times \ve t) & \rulelabel{(asso)}\\
          \lambda x^A.(\ve r\times \ve s) &\eq \lambda x^A.\ve r\times \lambda x^A.\ve s & \rulelabel{(dist$_{\lambda}$)}\\
          (\ve r\times \ve s)\ve t &\eq \ve r\ve t\times \ve s\ve t & \rulelabel{(dist$_{\mathsf{app}}$)}\\
          \ve r\ve s\ve t &\eq \ve r(\ve s\times \ve t) & \rulelabel{(curry)}
        \end{align*}
      \caption{Symmetric relation.}
  \label{tab:opSemSym}
\end{table}

Because of the associativity property of $\times$, the term $\ve r\times (\ve
s\times \ve t)$ is equivalent to the term $(\ve r\times \ve s)\times \ve t$, so we
can just write it $\ve r\times \ve s\times \ve t$. 

As explained in the introduction, variables of conjunctive types are useless,
hence all variables have prime types. This way, there is no term 
$\lambda x^{\tau\wedge\tau}.x$, but a term $\lambda y^\tau.\lambda
z^\tau.y\times z$ which is equivalent to $(\lambda y^\tau.\lambda
z^\tau.y)\times(\lambda y^\tau.\lambda z^\tau.z)$.

The size of a term is not invariant through the equivalence $\eq$. Hence, we
introduce a measure $M(\cdot)$, which is given in Table~\ref{tab:PandM}.
\begin{table}
  \[
    \begin{array}{rl|rl}
  P(x) &= 0& 
  M(x) &= 1\\ 
  P(\lambda x^A.\ve r) &= P(\ve r)& 
  M(\lambda x^A.\ve r) &= 1 + M(\ve r) + P(\ve r)\\ 
  P(\ve r\ve s) &= P(\ve r)& 
  M(\ve r\ve s) &= M(\ve r) + M(\ve s) + P(\ve r) M(\ve s)\\ 
  P(\ve r \times \ve s) &= 1 + P(\ve r) + P(\ve s)& 
  M(\ve r \times \ve s) &= M(\ve r) + M(\ve s)\\ 
  P(\pi_A(\ve r)) &= P(\ve r) &
  M(\pi_A(\ve r)) &= 1 +  M(\ve r) + P(\ve r)
    \end{array}
  \]
\caption{Measure on terms.}
\label{tab:PandM}
\end{table}

\begin{lemma}
  \label{lem:eqPr}
  If $\ve r\eq\ve s$ then $P(\ve r)=P(\ve s)$.
\end{lemma}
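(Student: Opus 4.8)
The plan is to argue by induction on the derivation of $\ve r\eq\ve s$. The relation $\eq$ is generated by the nine base rules of the symmetric part of Table~\ref{tab:opSem} together with the congruence rule $(C_\eq)$, so there is one base case per rule plus one inductive case for the context closure. Since the conclusion $P(\ve r)=P(\ve s)$ is symmetric, it is irrelevant whether a base equation is read left-to-right or right-to-left, and the side conditions of rules like \rulelabel{(dist$_{ee}$)}, \rulelabel{(subst)} or \rulelabel{(split)} play no role in the computation of $P$.

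First I would record the only conceptual point: the measure $P$ ignores all type annotations, being defined purely from the term skeleton, so the rule \rulelabel{(subst)}, $\ve r\eq\ve r[A/B]$, is immediate because $\ve r$ and $\ve r[A/B]$ have the same skeleton, hence $P(\ve r)=P(\ve r[A/B])$ by a trivial structural induction. The remaining base cases are direct computations from the defining clauses of $P$; the identities that do all the work are $P(\lambda x^A.\ve t)=P(\ve t)$, $P(\pi_A(\ve t))=P(\ve t)$, $P(\ve t\,\ve u)=P(\ve t)$ (so $P$ of an application does not see its argument), and $P(\ve t+\ve u)=1+P(\ve t)+P(\ve u)$, combined with commutativity and associativity of $+$ on the natural numbers. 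For instance, for \rulelabel{(asso)} both $(\ve r+\ve s)+\ve t$ and $\ve r+(\ve s+\ve t)$ have $P$-value $2+P(\ve r)+P(\ve s)+P(\ve t)$; for \rulelabel{(dist$_{ie}$)}, $P((\ve r+\ve s)\ve t)=P(\ve r+\ve s)=1+P(\ve r)+P(\ve s)=P(\ve r\ve t+\ve s\ve t)$; for \rulelabel{(split)}, $P(\pi_{A\wedge C}(\ve r+\ve s))=1+P(\ve r)+P(\ve s)=P(\pi_A(\ve r)+\pi_C(\ve s))$; and \rulelabel{(comm)}, \rulelabel{(dist$_{ii}$)}, \rulelabel{(dist$_{ei}$)}, \rulelabel{(dist$_{ee}$)} and \rulelabel{(curry)} are analogous one-line checks.

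For the inductive case, suppose $\ve r=C[\ve r']$ and $\ve s=C[\ve s']$ with $\ve r'\eq\ve s'$ and, by the induction hypothesis, $P(\ve r')=P(\ve s')$. I would then run a secondary induction on the shape of the context $C[\cdot]$ to conclude $P(C[\ve r'])=P(C[\ve s'])$. Each context former is handled by the same four identities: through $\lambda x^A.[\cdot]$, through $\pi_A([\cdot])$ and through the function side $[\cdot]\,\ve u$ of an application the value of $P$ is carried up unchanged modulo the sub-induction hypothesis; through the argument side $\ve u\,[\cdot]$ the value of $P$ does not depend on the hole at all; and through $[\cdot]+\ve u$ or $\ve u+[\cdot]$ one summand is handled by the sub-induction hypothesis while the other is literally unchanged.

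There is no real obstacle here: the statement is a routine invariance check, and the only things to get right are to keep the case analysis exhaustive and to exploit that $P$ does not depend on types (so \rulelabel{(subst)} is free) and that $P$ of an application is blind to its argument (so congruence through that position is free). The same bookkeeping will be reused almost verbatim in the proof of Lemma~\ref{lem:eqM}, where the more delicate arithmetic is concentrated rather than here.
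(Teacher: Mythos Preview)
Your proposal is correct and matches the paper's own proof essentially line for line: the paper also proceeds by induction on the derivation of $\eq$, dispatches each of the nine base rules by the same direct computations you sketch (including the observation that \rulelabel{(subst)} is immediate because $P$ ignores type annotations), and handles the congruence rule $(C_\eq)$ by a secondary case analysis on the shape of $C[\cdot]$.
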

\begin{proof}
  We check the case of each rule of Table~\ref{tab:opSemSym}, and then conclude by structural induction to handle the contextual closure.
  \begin{itemize}
  \item \rulelabel{(comm)}: $P(\ve r\times\ve s)=1+P(\ve
    r)+P(\ve s)=P(\ve s\times\ve r)$.
  \item \rulelabel{(asso)}: $ P((\ve r\times\ve s)\times\ve t) =2+P(\ve r)+P(\ve s)+P(\ve t)
    =P(\ve r\times(\ve s\times\ve t)) $.

  \item \rulelabel{(dist$_\lambda$)}:
      $P(\lambda x^A.(\ve r\times\ve s))
      =1+P(\ve r)+P(\ve s)
      =P(\lambda x^A.\ve r\times\lambda x^A.\ve s)$.

    \item  \rulelabel{(dist$_{\mathsf{app}}$)}:
      $P((\ve r\times\ve s)\ve t)=1+P(r)+P(s)
      =P(\ve r\ve t\times\ve s\ve t)$.

  \item \rulelabel{(curry)}:  $P((\ve r\ve s)\ve
    t) =P(\ve r) =P(\ve r(\ve s\times\ve t))$.
    \qedhere
  \end{itemize}
\end{proof}

\begin{lemma}
  \label{lem:eqM}
  If $\ve r \eq\ve s$ then $M(\ve r) = M(\ve s)$.
\end{lemma}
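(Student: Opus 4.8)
The plan is to mirror the structure of the proof of Lemma~\ref{lem:eqPr}: proceed by case analysis on the rule of $\eq$ used to derive $\ve r\eq\ve s$, checking in each base case that the two sides have equal $M$-value, and then handle the congruence closure $C[\ve r]\eq C[\ve s]$ by structural induction on $C[\cdot]$. The key extra ingredient throughout is Lemma~\ref{lem:eqPr}: whenever a subterm is replaced by an $\eq$-equivalent one, its $P$-value is unchanged, so the coefficients $P(-)$ occurring in the clauses for $M(\lambda x^A.\ve r)$, $M(\ve r\ve s)$ and $M(\pi_A(\ve r))$ stay in place. So Lemma~\ref{lem:eqM} is essentially Lemma~\ref{lem:eqPr} one level up, depending on it.

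For the base cases each obligation is a short algebraic identity. The purely structural rules \rulelabel{(comm)}, \rulelabel{(asso)} and \rulelabel{(subst)} are immediate, since $M$ ignores the order and grouping of summands and does not depend on type annotations. The distributivity rules are where the definition of $M$ was tuned, so they require a computation: for \rulelabel{(dist$_{ii}$)} both $M(\lambda x^A.(\ve r+\ve s))$ and $M(\lambda x^A.\ve r+\lambda x^A.\ve s)$ reduce to $2+M(\ve r)+M(\ve s)+P(\ve r)+P(\ve s)$; for \rulelabel{(dist$_{ie}$)} both sides of $(\ve r+\ve s)\ve t\eq\ve r\ve t+\ve s\ve t$ reduce to $M(\ve r)+M(\ve s)+2M(\ve t)+(P(\ve r)+P(\ve s))M(\ve t)$, which is precisely why $M(\ve r\ve s)$ carries the cross term $P(\ve r)M(\ve s)$; \rulelabel{(dist$_{ei}$)} gives $2+M(\ve r)+2P(\ve r)$ on both sides; and \rulelabel{(dist$_{ee}$)} and \rulelabel{(curry)} unfold the same way, using again $P(\ve r\ve s)=P(\ve r)$. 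Finally \rulelabel{(split)} yields $2+M(\ve r)+M(\ve s)+P(\ve r)+P(\ve s)$ on both sides.

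For the congruence case I would do induction on $C[\cdot]$, with induction hypothesis $M(C'[\ve r])=M(C'[\ve s])$ for the immediate subcontext $C'[\cdot]$, together with $P(C'[\ve r])=P(C'[\ve s])$, which follows from $C'[\ve r]\eq C'[\ve s]$ by Lemma~\ref{lem:eqPr}. The cases $C[\cdot]=\lambda x^A.C'[\cdot]$ and $C[\cdot]=\pi_A(C'[\cdot])$ then close because $M$ of these depends only on $M$ and $P$ of the argument; the case $C[\cdot]=C'[\cdot]\,\ve t$ uses both the induction hypothesis and Lemma~\ref{lem:eqPr} to match the $P(C'[\cdot])M(\ve t)$ summand; and the cases $C[\cdot]=\ve t\,C'[\cdot]$, $C[\cdot]=C'[\cdot]+\ve t$ and $C[\cdot]=\ve t+C'[\cdot]$ use only the induction hypothesis.

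I expect no genuine obstacle here; the only point requiring care is the bookkeeping in the application-driven rules \rulelabel{(dist$_{ie}$)}, \rulelabel{(curry)} and \rulelabel{(dist$_{ee}$)}, where the cross term $P(-)M(-)$ built into $M(\ve r\ve s)$ must absorb exactly the duplication of $\ve t$ (or of $\ve s,\ve t$) that those rules perform — this is the design constraint that fixed the shape of $M$ in the first place, and checking it is just the displayed equalities above.
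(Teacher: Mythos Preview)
Your proposal is correct and follows essentially the same approach as the paper: structural induction on the derivation of $\eq$, with the same algebraic identities in each base case and the same use of Lemma~\ref{lem:eqPr} in the contextual case to keep the $P(-)$ coefficients aligned. The paper's proof is just a slightly more explicit write-up of exactly this outline.
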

\begin{proof}
  We check the case of each rule of Table~\ref{tab:opSemSym}, and then conclude by structural induction to handle the contextual closure.
  \begin{itemize}
  \item \rulelabel{(comm)}:
    \( M(\ve r \times\ve s) =
    M(\ve r) + M(\ve s) = M(\ve s \times\ve r). \)

  \item \rulelabel{(asso)}:
    \( M((\ve r \times\ve s) \times\ve t) = M(\ve r) + M(\ve s) + M(\ve t) =
    M(\ve r \times (\ve s \times\ve t)). \)

  \item \rulelabel{(dist$_{\lambda}$)}:
    \(
      M(\lambda x^A . (\ve r \times\ve s))
       = 2 + M(\ve r) + M(\ve s) + P(\ve r) + P(\ve s)
       = M(\lambda x^A . \ve r \times \lambda x^A . \ve s)
    \)

  \item \rulelabel{(dist$_{\mathsf{app}}$)}:
    $
      M((\ve r \times\ve s) \ve t)
      = M(\ve r) + M(\ve s) + 2 M(\ve t) + P(\ve r) M(\ve t) + P(\ve s) M(\ve t)
      = M(\ve r \ve t \times\ve s \ve t)
    $

  \item \rulelabel{(curry)}:
    \(
      M((\ve r \ve s) \ve t)
      = M(\ve r)+M(\ve s)+P(\ve r)M(\ve s)+M(\ve t)+P(\ve r)M(\ve t)
      = M(\ve r(\ve s \times\ve t))
    \)
    \qedhere
  \end{itemize}
\end{proof}

\begin{lemma}\label{lem:newlemma}
  $M(\lambda x^A.r)> M(r)$,
  $M(rs) > M(r)$,
  $M(rs) > M(s)$,
  $M(r \times s) >  M(r)$,
  $M(r \times s) > M(s)$, and
  $M(\pi_A(r)) > M(r)$.
\end{lemma}
\begin{proof}
By induction on $r$, $M(r)\geq 1$. We conclude with a case inspection.
\end{proof}

We use the measure to prove that the equivalence class of a term is a finite set.

\begin{lemma}
  \label{lem:finiteClasses}
  For any term $\ve r$, the set $\{\ve s~|~\ve s \eq^*\ve r\}$ is finite (modulo
  $\alpha$-equivalence).
\end{lemma}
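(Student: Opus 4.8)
The plan is to show that $\{\ve s\mid\ve s\eq^*\ve r\}$ is contained in the finite set of terms $\ve s$ satisfying three bounds at once: $FV(\ve s)\subseteq FV(\ve r)$, $S(\ve s)\le M(\ve r)$, and every type annotation occurring in $\ve s$ lies in a fixed finite set of types $\mathcal T_{\ve r}$. Once these three bounds hold, finiteness is immediate: there are only finitely many ``skeletons'' of a term of bounded size, only finitely many ways to attach free variables taken from a fixed finite set, and only finitely many ways to annotate the $\lambda$'s, the $\pi$'s and the variables with types from $\mathcal T_{\ve r}$.

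For the size bound I would first upgrade Lemmas~\ref{lem:eqPr} and~\ref{lem:eqM} from a single $\eq$-step to $\eq^*$: since $\eq$ is symmetric, $\ve r\eq^*\ve s$ means $\ve r=\ve u_0\eq\ve u_1\eq\dots\eq\ve u_n=\ve s$, and a trivial induction on $n$ gives $M(\ve s)=M(\ve r)$ (and, likewise, $P(\ve s)=P(\ve r)$). Next, a routine structural induction on $\ve t$ shows $S(\ve t)\le M(\ve t)$: in each clause the defining expression for $M$ dominates that for $S$, because $P(\ve t)\ge 0$ and the extra summand $P(\ve r)M(\ve s)$ in the application case is nonnegative. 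Combining, every $\ve s\eq^*\ve r$ satisfies $S(\ve s)\le M(\ve s)=M(\ve r)$, so its number of variable occurrences, $\lambda$'s and $\pi$'s is bounded.

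For the remaining two bounds I would inspect the rules generating $\eq$. None of them introduces a free variable, so $FV(\ve s)\subseteq FV(\ve r)$ up to the relabelling performed by \rulelabel{(subst)}; and since the $\equiv$-equivalence classes of types are finite, the free variables of $\ve s$ still range over a finite set. (That $\equiv$-classes are finite is easy: every class has a representative in the normal form obtained by distributing $\Rightarrow$ over $\wedge$ and currying; these rewrite steps never decrease the size of a type, so every member of the class has size at most that of its normal form. Cf.\ also \cite{BruceDiCosmoLongoMSCS92}.) For the annotations: \rulelabel{(comm)}, \rulelabel{(asso)}, \rulelabel{(curry)}, \rulelabel{(dist$_{ii}$)} and \rulelabel{(dist$_{ie}$)} leave them unchanged; \rulelabel{(dist$_{ei}$)}, \rulelabel{(dist$_{ee}$)} and \rulelabel{(split)} either replace an annotation by one of its immediate sub-components or, read the other way, re-assemble an annotation $A\Rightarrow B$ out of pieces already present in the term; and \rulelabel{(subst)} replaces a sub-expression of an annotation by an $\equiv$-equivalent type. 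One then checks that the closure of the finite set of types occurring in $\ve r$ under sub-expressions, under $\equiv$, and under these re-assemblies is still finite --- using finiteness of $\equiv$-classes and the fact that the sizes involved stay under control --- and one takes this closure as $\mathcal T_{\ve r}$.

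I expect this last point --- pinning down $\mathcal T_{\ve r}$ --- to be the genuine obstacle. Unlike in an ordinary $\lambda$-calculus, here the annotations really do vary along $\eq$: for instance \rulelabel{(dist$_{ei}$)} turns $\pi_{A\Rightarrow B}(\lambda x^A.\ve t)$ into $\lambda x^A.\pi_B(\ve t)$, so on one side the compound type $A\Rightarrow B$ occurs as an annotation and on the other it does not. Consequently the naïve super-set ``terms of bounded size with bounded free variables'' is in fact \emph{infinite} (nothing there constrains the annotations), and the whole difficulty is to argue that only finitely many annotations are ever reached. The remaining ingredients --- the $\eq^*$-invariance of $M$ and the inequality $S\le M$ --- are routine once Lemmas~\ref{lem:eqPr} and~\ref{lem:eqM} are available.
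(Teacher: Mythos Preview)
Your core strategy coincides with the paper's: use the $\eq$-invariance of $M$ (Lemma~\ref{lem:eqM}) together with a bound on free variables to trap the equivalence class inside a set of the form $H(n,F)=\{\ve s\mid FV(\ve s)\subseteq F,\ M(\ve s)\le n\}$, and then argue this set is finite. The paper does exactly this, proving finiteness of $H(n,F)$ by induction on $n$ and working with $M$ directly rather than going through your intermediate inequality $S(\ve t)\le M(\ve t)$.

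Where you diverge is that you are \emph{more} careful than the paper on type annotations. You correctly note that $H(n,F)$ is not literally finite: $M$ and $FV$ ignore the type labels on $\lambda$'s, $\pi$'s and variables, so for instance $\{\lambda x^A.y^\tau\mid A\text{ any type}\}\subseteq H(2,\{y^\tau\})$. The paper's inductive step (``the abstractions $\lambda x^A.\ve r$ for $\ve r\in H(n,F\cup\{x^A\})$, the projections $\pi_A(\ve r)$ \dots'') silently lets $A$ range over all types and hence does not actually produce a finite set. Your third ingredient --- a finite set $\mathcal T_{\ve r}$ of admissible annotations --- is precisely the missing piece, and the paper simply does not address it.

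Your sketch of why $\mathcal T_{\ve r}$ stays finite is the one place that still needs tightening. Finiteness of $\equiv$-classes and closure under subtypes are fine; the delicate step is closure under the ``re-assemblies'' coming from \rulelabel{(dist$_{ei}$)}/\rulelabel{(dist$_{ee}$)}/\rulelabel{(split)} read right-to-left, which build $A\Rightarrow B$ (resp.\ $A\wedge C$) from smaller annotations and could in principle iterate under nested binders. A clean way to close this is to use subject reduction (Theorem~\ref{thm:SR}): every $\ve s\eq^*\ve r$ is typable with the same type, and in a typable term each annotation is, up to $\equiv$, determined by the types of the surrounding subterms --- all of which live in the finitely many $\equiv$-classes generated by $\can{A}$ and the types of the free variables. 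That argument is not in the paper either, but it is what your outline is reaching for.
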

\begin{proof}
  Since $\{\ve s~|~\ve s \eq^*\ve r\} \subseteq \{\ve s~|~FV(\ve s) = FV(\ve
r)~\mbox{and}~M(\ve s) = M(\ve r)\} \subseteq \{\ve s~|~FV(\ve s) \subseteq
FV(\ve r)~\mbox{and}~M(\ve s) \leq M(\ve r)\}$, where $FV(t)$ is the set of free
variables of $t$, all we need to prove is that for all natural numbers $n$, for
all finite sets of variables $F$, the set $H(n,F) = \{\ve s~|~FV(\ve s)
\subseteq F~\mbox{and}~M(\ve s) \leq n\}$ is finite.

  By induction on $n$. For $n = 1$ the set $\{\ve s~|~FV(\ve s) \subseteq
F~\mbox{and}~M(\ve s) \leq 1\}$ contains only the variables of $F$. Assume the
property holds for $n$, then, by the Lemma~\ref{lem:newlemma} the set $H(n+1,F)$
is a subset of the finite set containing the variables of $F$, the abstractions
$(\lambda x^A .\ve r)$ for $\ve r$ in $H(n,F \cup \{x\})$, the applications
$(\ve r\ve s)$ for $\ve r$ and $\ve s$ in $H(n,F)$, the products $\ve
r\times\ve s$ for $\ve r$ and $\ve s$ in $H(n,F)$, the projections $\pi_A(\ve
r)$ for $\ve r$ in $H(n,F)$.
\end{proof}

\begin{definition}
The reduction relation $\re$ is the smallest contextually closed relation
defined by the rules given in Table~\ref{tab:opSem}.
We write $\re^*$ for the transitive and reflexive closure of $\re$. 
\end{definition}
\begin{definition}
We write
$\toreq$ for the relation $\re$ modulo $\eq^*$ (i.e. $\ve r\toreq\ve s$ iff $\ve
r\eq^*\ve r'\re\ve s'\eq^*\ve s$), and $\toreq^*$ for its transitive and reflexive closure.
\end{definition}
Remark that, by Lemma~\ref{lem:finiteClasses}, a term has a finite number of
reducts in one step and these reducts can be computed.
\begin{table}
  \[
    \begin{array}{r@{\ }l@{\qquad}r@{\hspace{3cm}}r@{\ }l@{\qquad}r}
      \mbox{If }\ve s:A,\  (\lambda x^A.\ve r)\ve s &\re \ve r[\ve s/x] &
                                                                          \rulelabel{($\beta$)} &   \mbox{If }\ve r:A,\ \pi_A(\ve r\times\ve s) &\re \ve r & \rulelabel{($\pi$)}
    \end{array}
  \]
	\caption{Reduction relation.}
  \label{tab:opSem}
\end{table}

\subsection{Examples}

\begin{example}
  Let $\ve r:A$ and $\ve s:B$. Then $(\lambda x^{A}.\lambda
  y^B.x)(\ve r\times \ve s):A$ and
  \[
    (\lambda x^{A}.\lambda y^B.x)(\ve r\times \ve s) \eq (\lambda x^{A}.\lambda
    y^B.x)\ve r\ve s\re^* \ve r
  \]
  However, if $A\equiv B$, it is also possible to reduce in the following way
  \[
    (\lambda x^{A}.\lambda
    y^A.x)(\ve r\times \ve s) \eq (\lambda x^{A}.\lambda y^A.x)(\ve s\times \ve
    r) \eq (\lambda x^{A}.\lambda y^A.x)\ve s\ve r \re^* \ve s
  \]
  Hence, the usual encoding of the projector also behaves non-deterministically. 
\end{example}

\begin{example} Let $s:A$ and $\ve t:B$, and let $\tf=\lambda
  x^A.\lambda y^B.(x\times y)$.

  Then
  $\tf:A\Rightarrow B\Rightarrow (A\wedge B)\equiv ((A\wedge B)\Rightarrow
  A)\wedge((A\wedge B)\Rightarrow B)$. Therefore,
  $\pi_{(A\wedge B)\Rightarrow A}(\tf):(A\wedge B)\Rightarrow A$.
  Hence, $\pi_{(A\wedge B)\Rightarrow A}(\tf)(s\times t):A$.
  
  This term reduces as follows:
  \begin{align*}
    \pi_{(A\wedge B)\Rightarrow A}(\tf)(\ve s\times\ve t)
    &\eq\pi_{(A\wedge B)\Rightarrow A}(\tf)\ve s\ve t\\
    &\eq\pi_{(A\wedge B)\Rightarrow A}(\lambda x^A.(\lambda y^B.x)\times(\lambda y^B.y))st\\
    &\eq\pi_{(A\wedge B)\Rightarrow A}((\lambda x^A.\lambda y^B.x)\times(\lambda x^A.\lambda y^B.y))st\\
    &\re(\lambda x^A.\lambda y^B.x)st\\
    &\re(\lambda y^B.s)t
    \re \ve s
  \end{align*}
\end{example}

\begin{example}\label{ex:bool}
  Let $\true=\lambda x^A.\lambda y^B.x$ and $\false=\lambda x^A.\lambda y^B.y$.
  The term
  $\true\times \false\times \tf$ has type
  $((A\wedge B)\Rightarrow(A\wedge B))\wedge((A\wedge B)\Rightarrow(A\wedge B))$.
      
  Hence, $\pi_{(A\wedge B)\Rightarrow(A\wedge B)}(\true\times \false\times \tf)$ is well typed and reduces
non-deterministically either to $\true\times \false$ or to $\tf$. Moreover, as
$\true\times \false$ and $\tf$ are equivalent, the non-deterministic choice does
not play any role in this particular case. We will come back to the encoding of
booleans in \OC on Section~\ref{sec:bool}.
\end{example}

\section{Subject Reduction}\label{sec:SR}
The set of types assigned to a
term is preserved under $\eq$ and $\re$. Before proving this property, we 
prove the unicity of types (Lemma~\ref{lem:unicity}), the generation lemma
(Lemma~\ref{lem:generation}), and the substitution lemma
(Lemma~\ref{lem:substitution}). We only state the lemmas in this section. The
detailed proofs can be found in Appendix~\ref{app:SR}.

The following lemma states that a term can be typed only by equivalent types.

\begin{lemma}[Unicity]\label{lem:unicity}
  If $\ve r:A$ and $\ve r:B$, then $A\equiv B$.
\end{lemma}
\begin{proof}~
  \begin{itemize}
  \item If the last rule of the derivation of $\ve r:A$ is
    $(\equiv)$, then we have a shorter derivation of $r:C$ with
    $C\equiv A$, and, by the induction hypothesis, $C\equiv B$, hence $A\equiv
    B$.
  \item If the last rule of the derivation of $\ve r:B$ is $(\equiv)$ we proceed in the same way.
  \item All the remaining cases are syntax directed.
\qedhere
  \end{itemize}
\end{proof}

\begin{lemma}[Generation]
  \label{lem:generation}
  ~
  \begin{enumerate}
  \item\label{case:var} If $x\in\V_A$ and $x:B$, then $A\equiv B$.
  \item\label{case:lambda} If $\lambda x^A.\ve r:B$, then $B\equiv
    A\Rightarrow C$ and $\ve r:C$.
  \item If $\ve r\ve s:B$, then $\ve r:A\Rightarrow B$ and $\ve s:A$.
  \item If $\ve r\times \ve s:A$, then $A\equiv B\wedge C$ with $\ve r:B$ and $\ve s:C$.
  \item If $\pi_A(\ve r):B$, then $A\equiv B$ and $\ve r:B\wedge C$.
  \end{enumerate}
\end{lemma}
\begin{proof}
  Each statement is proved by induction on the typing derivation. For the
  statement~\ref{case:var}, we have $x\in\V_A$ and $x:B$. The only way to type
  this term is either by the rule $(ax)$ or $(\equiv)$.
  \begin{itemize}
  \item In the first case, $A=B$, hence $A\equiv B$.
  \item In the second case, there exists $B'$ such that $x:B'$ has
    a shorter derivation, and $B\equiv B'$. By the induction hypothesis
    $A\equiv B'\equiv B$.
  \end{itemize}
  For the statement~\ref{case:lambda}, we have $\lambda x^A.\ve
  r:B$. The only way to type this term is either by rule $(\Rightarrow_i)$,
  $(\equiv)$.
  \begin{itemize}
  \item In the first case, we have $B=A\Rightarrow C$ for some, $C$ and
    $\ve r:C$.
  \item In the second, there exists $B'$ such that $\lambda x^A.\ve
    r:B'$ has a shorter derivation, and $B\equiv B'$. By the induction
    hypothesis, $B'\equiv A\Rightarrow C$ and $\ve r:C$. Thus,
    $B\equiv B'\equiv A\Rightarrow C$.
  \end{itemize}
  The three other statements are similar.
\end{proof}

\begin{lemma}
  [Substitution]
  \label{lem:substitution}
  If $\ve r:A$, $\ve s:B$, and $x\in\V_B$, then $\ve r[\ve s/x]:A$.
\end{lemma}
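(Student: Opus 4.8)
The plan is to prove the Substitution Lemma by structural induction on the term $\ve r$, using the Generation Lemma (Lemma~\ref{lem:generation}) at each step to invert the typing judgement $\ve r:A$. Throughout, the hypothesis $\coherentContext{(\vars(\ve r)\cup\{x^B\})}$ is what guarantees that substituting $\ve s$ for $x^B$ never produces an ill-formed term: every occurrence of $x$ in $\ve r$ already carries the label $B$, so it is legitimately replaced by $\ve s$, and the coherence of variables is preserved (here I will also need the obvious fact that $\vars(\ve r[\ve s/x^B]) \subseteq (\vars(\ve r)\setminus\{x^B\})\cup\vars(\ve s)$, together with $\coherentContext{\vars(\ve s)}$ which follows from $\ve s:B$ by a trivial induction, or can be assumed as implicit).

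First I would treat the base cases. If $\ve r = x^B$, then $\ve r[\ve s/x^B] = \ve s$, and by Lemma~\ref{lem:generation}(1) applied to $x^B:A$ we get $B\equiv A$, so $\ve s:B$ together with rule $(\equiv)$ gives $\ve s:A$. If $\ve r = y^C$ with $y\neq x$, then $\ve r[\ve s/x^B] = y^C = \ve r$ and there is nothing to prove. For the inductive cases — $\ve r = \lambda y^C.\ve r'$, $\ve r = \ve r_1\ve r_2$, $\ve r = \ve r_1+\ve r_2$, $\ve r = \pi_C(\ve r')$ — the pattern is uniform: use the appropriate clause of Lemma~\ref{lem:generation} to decompose the typing of $\ve r$ into typings of its immediate subterms (possibly up to $\equiv$, which is harmless since we can reintroduce $(\equiv)$ at the end), apply the induction hypothesis to each subterm (after checking the coherence precondition restricts appropriately — e.g. for $\lambda y^C.\ve r'$ one may rename $y$ so that $y\neq x$ and $y\notin\vars(\ve s)$, so that $\coherentContext{(\vars(\ve r')\cup\{x^B\})}$ still holds), and then reassemble with the corresponding introduction/elimination rule, discharging its coherence side-condition using the substitution-stability of $\coherentContext{(\cdot)}$.

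The abstraction case is the one requiring the most care, and I expect it to be the main obstacle: from $\lambda y^C.\ve r':A$ we get $A\equiv C\Rightarrow D$, $\ve r':D$, and $\coherentContext{(\vars(\ve r')\cup\{y^C\})}$; one must first perform an $\alpha$-conversion to ensure $y\notin\{x\}\cup\vars(\ve s)$ (this is where implicit $\alpha$-equivalence is used, and one should note it does not change the type), then observe that $\coherentContext{(\vars(\ve r')\cup\{x^B\})}$ follows from $\coherentContext{(\vars(\lambda y^C.\ve r')\cup\{x^B\})}$, apply the induction hypothesis to obtain $\ve r'[\ve s/x^B]:D$, check $\coherentContext{(\vars(\ve r'[\ve s/x^B])\cup\{y^C\})}$ using the variable-containment fact above, apply $(\Rightarrow_i)$ to get $\lambda y^C.(\ve r'[\ve s/x^B]):C\Rightarrow D$, note that $\lambda y^C.(\ve r'[\ve s/x^B]) = (\lambda y^C.\ve r')[\ve s/x^B]$ by the choice of $y$, and finally apply $(\equiv)$ with $A\equiv C\Rightarrow D$. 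The remaining cases ($\ve r_1\ve r_2$, $\ve r_1+\ve r_2$, $\pi_C(\ve r')$) are entirely routine applications of the same recipe — Lemma~\ref{lem:generation} to invert, induction hypothesis on subterms, the matching typing rule to reassemble, $(\equiv)$ to fix up the type — and I would present them briefly.
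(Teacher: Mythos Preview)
Your proposal is correct and follows essentially the same approach as the paper: a structural induction on $\ve r$, using the Generation Lemma (which the paper explicitly announces it will use implicitly) to invert the typing in each case, then reassembling with the corresponding typing rule and $(\equiv)$. You are in fact more careful than the paper about $\alpha$-renaming in the abstraction case and about re-establishing the coherence side-conditions after substitution; the paper leaves these implicit.
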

\begin{proof}
  By structural induction on $r$ (cf.~Appendix~\ref{app:SR}).
\end{proof}

\begin{theorem}[Subject reduction]\label{thm:SR}
  If $\ve r:A$ and $\ve r\re\ve s$ or $\ve r\eq\ve s$ then $\ve s:A$.
\end{theorem}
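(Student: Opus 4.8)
The plan is to proceed by induction on the derivation of $\ve r\re\ve s$ or $\ve r\eq\ve s$. The three closure rules $(C_\eq)$, $(C_\re^{(\neg\delta)})$ and $(C_\re^{\delta})$ are handled by the induction hypothesis together with a replacement step: a subterm of a typed term is typed (by Lemma~\ref{lem:generation}), and replacing such a subterm by a term of the same type again yields a typable term of the same type, by a case analysis on the outermost constructor of the context using Lemma~\ref{lem:generation} and the congruence of $\equiv$. The only special closure case is rule $\rulelabel{(subst)}$, which relabels type annotations: since $A\equiv B$, uniform relabelling of $B$ by $A$ preserves typing up to $\equiv$ --- hence, by rule $(\equiv)$, exactly --- and preserves the functionality side conditions, since equalities between variable labels are unaffected by a uniform substitution. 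This reduces the statement to checking the root rules, each symmetric rule in both directions.

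For the reduction rules and the conjunctive equivalences the verification is short and uses only the lemmas already available. For $\rulelabel{($\beta$)}$, from $(\lambda x^A.\ve r)\ve s:E$ one gets $\ve r:C$ with $\coherentContext{(\vars(\ve r)\cup\{x^A\})}$, hence $\lambda x^A.\ve r:A\Rightarrow C$ by $(\Rightarrow_i)$ and $(\lambda x^A.\ve r)\ve s:C$ by $(\Rightarrow_e)$, so $E\equiv C$ by Lemma~\ref{lem:unicity}; the Substitution Lemma~\ref{lem:substitution} (using the side condition $\ve s:A$) then gives $\ve r[\ve s/x^A]:C$, hence $\ve r[\ve s/x^A]:E$. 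Rules $\rulelabel{($\pi_1$)}$ and $\rulelabel{($\pi_n$)}$ are immediate from Lemma~\ref{lem:generation}, and $\rulelabel{($\delta$)}$ follows from Lemma~\ref{lem:unicity} (its side condition gives $E\equiv A\wedge B$) and rule $(\wedge_{e_n})$. Rules $\rulelabel{(comm)}$, $\rulelabel{(asso)}$ and $\rulelabel{(split)}$ are reshufflings of conjunctions, settled with Lemma~\ref{lem:generation}, the side conditions of $\rulelabel{(split)}$, and associativity and commutativity of $\wedge$ under $\equiv$; and $\rulelabel{(subst)}$ follows in both directions from the relabelling remark above.

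The core of the proof is the group of ``confusing'' equivalences $\rulelabel{(dist$_{ii}$)}$, $\rulelabel{(dist$_{ie}$)}$, $\rulelabel{(dist$_{ei}$)}$, $\rulelabel{(dist$_{ee}$)}$ and $\rulelabel{(curry)}$, where the outermost type constructor differs between the two sides, so Lemma~\ref{lem:generation} no longer aligns the two types directly. For instance, for $\rulelabel{(dist$_{ie}$)}$, $(\ve r+\ve s)\ve t\eq\ve r\ve t+\ve s\ve t$, from $(\ve r+\ve s)\ve t:B$ one obtains $\ve r+\ve s:A\Rightarrow B$ and $\ve t:A$, then $A\Rightarrow B\equiv E\wedge F$ with $\ve r:E$ and $\ve s:F$, and must conclude that $E$ and $F$ are arrow types of domain $\equiv A$ whose codomains have conjunction $\equiv B$. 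This step, and its analogues for the other four rules (where the isomorphisms~\eqref{iso:comm}--\eqref{iso:curry} and the side condition $\ve r:A\Rightarrow(B\wedge C)$ of $\rulelabel{(dist$_{ee}$)}$ also intervene), rests on structural properties of $\equiv$ that go beyond Lemmas~\ref{lem:unicity}, \ref{lem:generation} and~\ref{lem:substitution}: that $G\wedge H\equiv A\Rightarrow D$ forces $G\equiv A\Rightarrow D_1$ and $H\equiv A\Rightarrow D_2$ with $D\equiv D_1\wedge D_2$; that conjuncts cancel, $G\wedge H\equiv G\wedge H'$ implying $H\equiv H'$; and that $A\Rightarrow D\equiv A\Rightarrow D'$ implies $D\equiv D'$. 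I would first establish these from a canonical-form presentation of $\equiv$ (orienting the four isomorphisms into a terminating and confluent rewrite system on types, with $\wedge$ taken modulo associativity and commutativity). Granting them, each of the five rules becomes a finite computation on types combining Lemma~\ref{lem:generation}, the isomorphisms, Lemma~\ref{lem:unicity}, rule $(\equiv)$, and a check that the functionality conditions transfer (they do, since none of these rules enlarges the free-variable set).

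The hard part is therefore precisely the confusing rules: unlike in the ``non confusing'' setting of~\cite{DowekWernerJSL98}, the head symbol of a type does not determine its shape, so Subject Reduction for the distributivity and currying rules genuinely depends on a prior analysis of the equivalence on types; the remainder is bookkeeping, the only recurring cares being that symmetric rules be checked in both directions and that the coherence conditions be rechecked at every step.
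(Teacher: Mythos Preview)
Your plan coincides with the paper's: induction on the derivation of $\eq$/$\re$, each root rule checked (both directions for the symmetric ones), the contextual closures handled by the induction hypothesis plus Generation, and $\rulelabel{($\beta$)}$ by the Substitution Lemma. The organisation and the tools invoked are the same.

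Where you differ is that you make explicit a step the paper leaves implicit. In the paper's treatment of $\rulelabel{(dist$_{ie}$)}$, direction $(^{\to})$, from $\ve r+\ve s:B\Rightarrow A$ it simply writes ``Hence $A\equiv A_1\wedge A_2$, with $\ve r:B\Rightarrow A_1$ and $\ve s:B\Rightarrow A_2$'', and similarly in $\rulelabel{(dist$_{ee}$)}$ it asserts ``Then $\ve s:B$'' and ``$A\equiv C$'' without argument. These are exactly the structural facts about $\equiv$ you isolate (splitting an arrow type through a conjunction, cancellation, and codomain injectivity), and they do not follow from Lemmas~\ref{lem:unicity}--\ref{lem:substitution} alone. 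Your proposal to establish them via a canonical-form analysis of $\equiv$ is correct, and in fact the paper later builds precisely this machinery (the functions $\can{\cdot}$ and $\canO{\cdot}$ and Lemma~\ref{lem:eqCan}) for the normalisation proof in Section~\ref{sec:SN}; that lemma gives $A\equiv B\iff\canO{A}=\canO{B}$, from which all three of your properties follow by multiset reasoning on the conjunction-free conjuncts. So you have not taken a different route so much as filled a gap the paper silently steps over, anticipating tools it introduces only afterwards.
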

\begin{proof}
  By induction on the rewrite relation (cf.~Appendix~\ref{app:SR}).
\end{proof}

\section{Strong Normalisation}\label{sec:SN}
In this section we prove the strong normalisation of reduction $\toreq$:
every
reduction sequence fired from a typed term eventually terminates.
The set of strongly normalising terms with respect to reduction~\toreq\ is
written $\SN$.
The size of the longest reduction issued from $t$ is written $|t|$ (recall that
each term has a finite number of reducts).

To prove that every term is in $\SN$, we associate, as usual, a set $\interp A$
of strongly normalising terms to each type $A$. A term $r:A$ is said to be
reducible when $r\in\interp A$. We then prove an adequacy theorem stating that
every well typed term is reducible.

In simply typed lambda calculus we can either define $\interp{A_1\Rightarrow
A_2\Rightarrow \cdots\Rightarrow A_n\Rightarrow\tau}$ as the set of terms $r$
such that for all $s\in\interp{A_1}$,
$rs\in\interp{A_2\Rightarrow\cdots\Rightarrow A_n\Rightarrow\tau}$ or,
equivalently, as the set of terms $r$ such that for all $\ve
s_i\in\interp{A_i}$, $rs_1\dots s_n\in\interp\tau=\SN$. To prove that a term of
the form $\lambda x^A.t$ is reducible, we need to use the so-called CR3
property~\cite{Girard89}, in the first case, and the property that a term whose
all one-step reducts are in $\SN$ is in $\SN$, in the second. In \OC, an
introduction can be equivalent to an elimination e.g.~$rt\times st\eq (r\times
s)t$, hence, we cannot define a notion of neutral term and have an equivalent to
the CR3 property. Therefore, we use the second definition.


Before we prove the normalisation of \OC, we first reformulate the proof of
strong normalisation of simply typed lambda-calculus along these lines.

\subsection{Normalisation of simply typed lambda calculus}\label{sec:SNinST}
\begin{definition}[Elimination context] 
  Consider an extension of simply typed lambda calculus where we introduce an
  extra symbol $\hole A$, called hole of type $A$.

  An elimination context with a hole $\hole{B_1\Rightarrow\dots\Rightarrow B_n\Rightarrow\tau}$
is a term $K^\tau_{B_1\Rightarrow\dots\Rightarrow B_n\Rightarrow\tau}$ of type $\tau$ of the form 
$\hole{B_1\Rightarrow\dots\Rightarrow B_n\Rightarrow\tau}r_1\dots r_n$. We write
$K^\tau_A[t]$, for the term $K^\tau_A[t/\hole A] = tr_1\dots r_n$. 
\end{definition}

\begin{definition}
  [Terms occurring in an elimination context]
$\T(\hole Ar_1\dots r_n) = 
\{r_1,\dots, r_n\}$. Note that the types of the elements of $\T(\hole Ar_1\dots r_n)$ are
smaller than $A$, and that if $r_1,\dots, r_n\in\SN$, then $\hole Ar_1\dots r_n\in\SN$. 
\end{definition}

\begin{definition}[Reducibility]
The set $\interp{A}$ of reducible terms of type $A$ is defined by
structural induction on $A$ as the set of terms $t:A$ such that for any
elimination context $K^\tau_A$ such that the terms in $\T(K^\tau_A)$ are
all reducible, we have $K^\tau_A[t]\in\SN$. 
\end{definition}

\begin{definition}
  [Reducible elimination context]
  An elimination context $K^B_A$ is reducible, if all the terms in $\T(K^B_A)$ are reducible.
\end{definition}

\begin{lemma}
  \label{lem:CR1ST}
  For all $A$, $\interp A\subseteq\SN$ and all the variables of type $A$ are in $\interp A$.
\end{lemma}
\begin{proof}
  By induction on $A$.
\end{proof}

\begin{lemma}[Adequacy of application]\label{app}
If $r \in \interp{A \Rightarrow B}$ and $s \in \interp{A}$, then $r s \in
\interp{B}$.
\end{lemma}
\begin{proof}
  Let $K^\tau_B$ be a reducible elimination context.
  We need to prove that $K^\tau_B[rs]\in\SN$. As $s\in\interp A$, the
  elimination context $K'^\tau_{A\Rightarrow B}=K^\tau_B[\hole{A\Rightarrow B}s]$ is reducible, and since
  $r\in\interp{A\Rightarrow B}$, we have $K^\tau_B[rs]=K'^\tau_{A\Rightarrow B}[r]\in\SN$.
\end{proof}

\begin{lemma}[Adequacy of abstraction]\label{lambda}
If for all $t \in \interp{A}$, $r[t/x] \in \interp{B}$, then 
$\lambda x^A.r \in \interp{A \Rightarrow B}$. 
\end{lemma}
\begin{proof}
We need to prove that for every reducible elimination context $K^\tau_{A\Rightarrow B}$, 
$K^\tau_{A\Rightarrow B}[\lambda x^A.r]\in\SN$, that is that all its one step
reducts are in $\SN$.
By Lemma~\ref{lem:CR1ST}, $x\in\interp A$, so $r\in\interp B\subseteq\SN$.
Then, we proceed by induction on $|r| + |K^\tau_{A\Rightarrow B}|$.
\end{proof}

\begin{definition}[Adequate substitution]
  A substitution $\sigma$ is adequate if for all $x:A$, we have
  $\sigma(x)\in\interp{A}$.
\end{definition}

\begin{theorem}[Adequacy]\label{thm:adequacyST}
  If $r:A$, the for all $\sigma$ adequate, we have $\sigma r\in\interp{A}$.
\end{theorem}
\begin{proof}
  By induction on $r$. 
\end{proof}

\begin{theorem}[Strong normalisation]
 If $r:A$, then $r\in\SN$.
\end{theorem}
\begin{proof}
By Lemma~\ref{lem:CR1ST}, the idendity substitution is adequate. Thus, 
 by Theorem~\ref{thm:adequacyST} and Lemma~\ref{lem:CR1ST}, 
 $r\in\interp{A}\subseteq\SN$.
\end{proof}

\subsection{Reduction of a product}\label{sec:RP}
When simply-typed lambda-calculus is extended with pairs, proving that if
$r_1\in\SN$ and $r_2\in\SN$ then $r_1\times r_2\in\SN$ is easy. However, in \OC
this property (Lemma~\ref{lem:prodOfSN}) is harder to prove, as it requires a
characterisation of the terms equivalent to the product $r_1\times r_2$
(Lemma~\ref{lem:eqProd}) and of all the reducts of this term
(Lemma~\ref{lem:reProd}).

In Lemma~\ref{lem:eqProd}, we characterise the terms equivalent to a product.

\begin{lemma} \label{lem:eqProd} If $\ve r\times \ve s\eq^*\ve t$ then either
  \begin{enumerate}
  \item\label{it:eqProd-sum} $\ve t=\ve u\times \ve v$ where either
    \begin{enumerate}
    \item $\ve u\eq^*\ve t_{11}\times \ve t_{21}$ and $\ve v\eq^*\ve
      t_{12}\times \ve t_{22}$ with $\ve r\eq^*\ve t_{11}\times \ve t_{12}$ and
      $\ve s\eq^*\ve t_{21}\times \ve t_{22}$, or
    \item $\ve v\eq^*\ve w\times \ve s$ with $\ve r\eq^*\ve u\times \ve w$, or
      any of the three symmetric cases, or
    \item $\ve r\eq^*\ve u$ and $\ve s\eq^*\ve v$, or the symmetric case.
    \end{enumerate}
  \item\label{it:eqProd-lam} $\ve t=\lambda x^A.\ve a$ and $\ve a\eq^*\ve a_1\times \ve
    a_2$ with $\ve r\eq^*\lambda x^A.\ve a_1$ and $\ve s\eq^*\lambda x^A.\ve
    a_2$.
  \item\label{it:eqProd-app} $\ve t=\ve a\ve v$ and $\ve a\eq^*\ve a_1\times \ve a_2$,
    with $\ve r\eq^*\ve a_1\ve v$ and $\ve s\eq^*\ve a_2\ve v$.
  \end{enumerate}
\end{lemma}
\begin{proof}
  By a double induction, first on $M(\ve t)$ and then on the length of the relation $\eq^*$ (cf.~Appendix~\ref{app:RP}).
\end{proof}

In Lemma~\ref{lem:reProd}, we characterise the reducts of a product.

\begin{lemma}
  \label{lem:reProd}
  If $\ve r_1\times \ve r_2\eq^*\ve s\re\ve t$, there exists $\ve u_1$, $\ve
  u_2$ such that $\ve t\eq^*\ve u_1\times \ve u_2$ and either ($\ve r_1\toreq\ve
  u_1$ and $\ve r_2\toreq\ve u_2$), or ($\ve r_1\toreq\ve u_1$ and $\ve
  r_2\eq^*\ve u_2$), or ($\ve r_1\eq^*\ve u_1$ and $\ve r_2\toreq\ve u_2$).
\end{lemma}
\begin{proof}
  By induction on $M(\ve r_1\times \ve r_2)$. 
\end{proof}

\begin{lemma}\label{lem:prodOfSN}
  If $\ve r_1\in\SN$ and $\ve r_2\in\SN$, then $\ve r_1\times \ve r_2\in\SN$.
\end{lemma}
\begin{proof}
  By Lemma~\ref{lem:reProd}, from a reduction sequence starting from $\ve
  r_1\times \ve r_2$ we can extract one starting from $\ve r_1$,
  or $\ve r_2$ or both. Hence, this reduction sequence is finite.
\end{proof}

\subsection{Reduction of a term of a conjunctive type}\label{sec:CT}
The next lemma takes advantage of the fact that all the variables have prime
types to prove that all terms of conjunctive type, even open ones, reduce to a
product. For instance, instead of the term $\lambda x^{\tau\wedge\tau}.x$, of
type $((\tau\wedge\tau)\Rightarrow\tau)\wedge((\tau\wedge\tau)\Rightarrow\tau)$,
we must write $\lambda y^\tau.\lambda z^\tau.y\times z$, which is equivalent to $(\lambda y^\tau.\lambda z^\tau.y)\times (\lambda y^\tau.\lambda z^\tau.z)$.
\begin{lemma}\label{lem:lelemme}
  If $\ve r:\bigwedge_{i=1}^n A_i$, then $\ve r\toreq^*\prod_{i=1}^n\ve
  r_i$ where $r_i:A_i$.
\end{lemma}
\begin{proof}
  By induction on $r$.
  \begin{itemize}
  \item $\ve r=x$, then it has a prime type, so take $r_1=r$.
  \item Let $\ve r=\lambda x^C.\ve s$. Then, by Lemma~\ref{lem:generation},
    $\ve s:D$ with $C\Rightarrow D\equiv \bigwedge_iA_i$. So, by
    Corollary~\ref{cor:ImpConj}, $D\equiv\bigwedge_i D_i$, and so, by the induction
    hypothesis, $\ve s\toreq^*\prod_i\ve s_i$. Therefore, $\lambda x^C.\ve
    s\toreq^*\prod_i\lambda x^C.\ve s_i$.
  \item Let $\ve r=\ve s\ve t$. Then, by Lemma~\ref{lem:generation},
    $s:C\Rightarrow \bigwedge_iA_i$, so $s:\bigwedge_i(C\Rightarrow
    A_i)$. Therefore, by the induction hypothesis, $\ve
    s\toreq^*\prod_i\ve s_i$, and so $\ve s\ve t\toreq^*\prod_i\ve s_i\ve t$. 
  \item Let $\ve r=\ve s\times\ve t$. Then, by
    Lemma~\ref{lem:generation}, $s:B$ and
    $t:C$, with $B\wedge C\equiv\bigwedge_iA_i$. By
    Lemma~\ref{lem:eqConjN}, 
    there exists a partition $E\uplus F\uplus G$ of $\{1,\dots,n\}$ such that
$A_i\equiv B_i\wedge C_i$, when $i\in E$;
$A_i\equiv B_i$, when $i\in F$;
$A_i\equiv C_i$, when $i\in G$;
$B\equiv \bigwedge_{i\in E\uplus F}B_i$; and
$C\equiv \bigwedge_{i\in E\uplus G}C_i$.
    By the induction hypothesis, $\ve s\toreq^*\prod_{i\in E\uplus F}s_i$ and
    $t\toreq^*\prod_{i\in E\uplus G} t_i$.
    If $i\in E$, we let $r_i = s_i\times t_i$, if $i\in F$, we let $r_i=s_i$, if
    $i\in G$, we let $r_i=t_i$. We have $r=s\times t\toreq^*\prod_{i=1}^n r_i$.

  \item Let $r=\pi_{\bigwedge_iA_i}(s)$. Then, by
    Lemma~\ref{lem:generation}, $s:\bigwedge_iA_i\wedge B$, and hence, by
    the induction hypothesis, $s\toreq^*\prod_is_i\times t$ where
    $s_i:A_i$ and $t:B$, hence $r\toreq\prod_is_i$.
    \qedhere
  \end{itemize}
\end{proof}

\begin{corollary}\label{cor:lelemme}
  If $r:A\wedge B$, then $r\toreq^* r_1\times r_2$ where $r_1:A$ and $r_2:B$.
\end{corollary}
\begin{proof}
  Let $\PF A=[A_i]_i$, $\PF B=[B_j]_j$, by Lemma~\ref{lem:eqConjPF},
  $A\wedge B\equiv\bigwedge_iA_i\wedge\bigwedge_jB_j$. Then, by
  Lemma~\ref{lem:lelemme}, $r\toreq^* \prod_ir_{1i}\times\prod_jr_{2j}$. Take
  $r_1=\prod_ir_{1i}$ and $r_2=\prod_jr_{2j}$.
\end{proof}

\subsection{Reducibility}\label{sec:Red}
\begin{definition}
  [Elimination context]
  Consider an extension of the language where we introduce an extra symbol
  $\hole A$, called hole of type $A$. We define the set of elimination contexts with a hole $\hole A$ as
  the smallest set such that:
  \begin{itemize}
  \item $\hole A$ is an elimination context of type $A$,
  \item if $K_A^{B\Rightarrow C}$ is an elimination context of type $B
    \Rightarrow C$ with a hole of type $A$, and $r:B$ then
    $K_A^{B\Rightarrow C}r$ is an elimination context of type $C$ with a hole
    of type $A$, 
\item and if $K_A^{B\wedge C}$ is an elimination context of type $B \wedge C$
  with a hole of type $A$, then
$\pi_B(K_A^{B\wedge C})$ is an elimination context of type $B$ with a hole of
type $A$.
\end{itemize}
We write $K_A^{B}[t]$ for $K_A^B[t/\hole A]$, where $\hole A$ is the hole of
$K_A^B$. In particular, $t$ may be an elimination context.
\end{definition}
\begin{example}
  Let $K^\tau_\tau=\hole\tau$ and
  $K'^\tau_{\tau\Rightarrow(\tau\wedge\tau)}=K^\tau_\tau[\pi_\tau(\hole{\tau\Rightarrow(\tau\wedge\tau)}x)]$.
  Then
  $K'^\tau_{\tau\Rightarrow(\tau\wedge\tau)}=\pi_\tau(\hole{\tau\Rightarrow(\tau\wedge\tau)}x)$,
  and $K'^\tau_{\tau\Rightarrow(\tau\wedge\tau)}[\lambda y^\tau.y\times y]=\pi_\tau((\lambda y^\tau.y\times y)x)$.
\end{example}

\begin{definition}
  [Terms occurring in an elimination context]
  Let $K_A^B$ be an elimination context.
  The multiset of terms occurring in $K_A^B$ is defined as
  \[
    \T(\hole A) = \emptyset;\qquad
    \T(K^{B\Rightarrow C}_Ar) = \T(K^{B\Rightarrow C}_A) \uplus \{r\};\qquad
    \T(\pi_B(K^{B\wedge C}_A)) =\T(K^{B\wedge C}_A)
  \]
  We write $|K^B_A|$ for $\sum_{i=1}^n|r_i|$ where $[r_1,\dots,r_n]=\T(K^B_A)$.
\end{definition}

\begin{example}
  $\T(\hole Ars) = [r,s]$ and $\T(\hole A(r\times s)) = [r\times s]$.
Remark that $K^B_A[t]\eq^* K'^{B}_{A}[t]$ does not imply $\T(K^B_A)\sim{\cal
  T}(K'^B_A)$.
\end{example}

Remark that if $K^B_A$ is a context, $m(B)\leq m(A)$ and hence if a term in
$\T(K^B_A)$ has type $C$, then $m(C)<m(A)$.
\begin{lemma}\label{lem:KvarSN}
  Let $K^\tau_A$ be an elimination context such that $\T(K^\tau_A)\subseteq\SN$,
  let $\PF A=[B_1,\dots,B_n]$, and let $x_i\in\V_{B_i}$.
  Then $K^\tau_A[x_1\times\dots\times x_n]\in\SN$.
\end{lemma}
\begin{proof}
By induction on the number of projections in $K^\tau_A$.
  \begin{itemize}
  \item If $K^\tau_A$ does not contain any projection, then it has the
    form $[]^A~r_1~...~r_m$. Let $C_i$ be the type of $r_i$, we have
    $A \equiv C_1 \Rightarrow ... \Rightarrow C_n \Rightarrow \tau$,
    thus $A$ is prime, $n=1$, and we need to prove that
    $x_1~r_1~...~r_m$ is in \SN\ which is a consequence of the fact that
    $r_1$, ..., $r_n$ are in \SN.

  \item Otherwise, $K^\tau_A = K'^\tau_{B}[\pi_{B}(\hole Ar_1\dots r_m)]$, and
    $K^\tau_A[x_1\times\dots\times x_n] =
    K'^\tau_{B}[\pi_{B}((x_1\times\dots\times x_n)r_1\dots r_m)] \eq^*
    K'^\tau_{B}[\pi_{B}(x_1r_1\dots r_m\times\dots\times x_nr_1\dots r_m)]$.  We
    prove that this term is in $\SN$ by showing, more generally, that if $s_i$
    are reducts of $x_ir_1\dots r_m$, then
    $K'^\tau_{B}[\pi_{B}(s_1\times\dots\times s_n)]\in\SN$. To do so, we show,
    by induction on $|K'^\tau_B|+|s_1\times\dots\times s_n|$, that all the one
    step reducts of this term are in $\SN$.
    \begin{itemize}
    \item If the reduction takes place in one of the terms in $\T(K'^\tau_B)$ or
      in one of the $s_i$, we apply the induction hypothesis.
    \item Otherwise, the reduction is a \rulelabel{($\pi$)} reduction of $\pi_B
      (s_1\times\dots\times s_n)$ yielding, without lost of generality, a term
      of the form $K'^\tau_B[s_1\times\dots\times s_q]$.
      This term is a reduct of
      $K'^\tau_{B}[(x_1\times\dots\times x_q)r_1\dots r_m]$.
      As the context
      $K'^\tau_{B}[([]^C~r_1\dots r_m)]$
      contains one projection less than
      $K^\tau_A$ this term is in $\SN$. Hence so does its reduct
      $K'^\tau_B[s_1\times\dots\times s_q]$.
      \qedhere
   \end{itemize}
  \end{itemize}
\end{proof}

\begin{definition}[Reducibility]\label{def:interpretation}
  The set $\interp A$ of reducible terms of type $A$ is
  defined by induction on $m(A)$ as the set of terms $t:A$ such that for any elimination context $K^\tau_A$ such that the terms of $\T(K^\tau_A)$ are all reducible, we have $K^\tau_A[t]\in\SN$.
\end{definition}

\begin{definition}
  [Reducible elimination context]
  An elimination context $K^B_A$ is reducible, if all the terms in ${\cal
    T}(K^B_A)$ are reducible.

  From now on we consider all the elimination contexts to be reducible.
\end{definition}

The following lemma is a trivial consequence of the definition of reducibility.
\begin{lemma}\label{lem:eqInterp}
  If $A\equiv B$, then $\interp A=\interp B$.
  \qed
\end{lemma}

\begin{lemma}
  \label{lem:CR1}
  For all $A$, $\interp A\subseteq\SN$ and $\interp A\neq\emptyset$.
\end{lemma}
\begin{proof}
  By induction on $m(A)$. By the induction hypothesis, for all the $B$ such that
$m(B)<m(A)$, $\interp B\neq\emptyset$. Thus, there exists an elimination context
$K^\tau_A$. Hence, if $r\in\interp A$, $K^\tau_A[r]\in\SN$, hence $r\in\SN$.

We then prove that if $\PF A=[B_1,\dots,B_n]$ and $x_i\in\V_{B_i}$ then $\prod_ix_i\in\interp A$.
By the induction hypothesis, $\T(K^\tau_A)\subseteq\SN$,
hence, by Lemma~\ref{lem:KvarSN}, $K^\tau_A[x_1\times\dots\times x_n]\in\SN$.
\end{proof}

\subsection{Adequacy}\label{sec:Adequacy}
We finally prove the adequacy theorem (Theorem~\ref{thm:adequacy}) showing that
every typed term is reducible,
 and the strong normalisation theorem (Theorem~\ref{thm:SN}) as a
consequence of it.

\begin{lemma}[Adequacy of projection]\label{lem:SNimpliespiSN}
  If $r\in\interp{A\wedge B}$, then $\pi_A(r)\in\interp A$.
\end{lemma}
\begin{proof}
  We need to prove that $K^\tau_A[\pi_A(r)]\in\SN$. Take $K'^\tau_{A\wedge B}=
  K^\tau_A[\pi_A\hole{A\wedge B}]$
  and since $r\in\interp{A\wedge B}$, we have $K^\tau_A[\pi_A(r)]=K'^\tau_{A\wedge B}[r]\in\SN$.
\end{proof}

\begin{lemma}
  [Adequacy of application]\label{lem:appOfInterp}
  If $r\in\interp{A\Rightarrow B}$, and
  $s\in\interp A$,
  then $rs\in\interp B$.
\end{lemma}
\begin{proof}
  We need to prove that $K^\tau_B[rs]\in\SN$. Take $K'^\tau_{A\Rightarrow
    B}=K^\tau_B[\hole{A\Rightarrow B}s]$ and since
  $r\in\interp{A\Rightarrow B}$, we have $K^\tau_B[rs]=K'^\tau_{A\Rightarrow B}[r]\in\SN$.
\end{proof}

\begin{lemma}[Adequacy of product]\label{lem:adequacyOfProd}
  If $r\in\interp A$ and $s\in\interp B$, then
  $\ve r\times \ve s\in\interp{A\wedge B}$.
\end{lemma}
\begin{proof}
  We need to prove that $K^\tau_{A\wedge B}[r\times s]\in\SN$.
  We proceed by induction on the number of projections in $K^\tau_{A\Rightarrow B}$.
  Since the hole of
$K^\tau_{A\wedge B}$ has type $A\wedge B$, and $K^\tau_{A\wedge B}[t]$ has type
$\tau$ for any $t:A$, we can assume, without lost of generality, that the context
$K^\tau_{A\wedge B}$ has the form $K'^\tau_C[\pi_C (\hole{A\wedge B}t_1\dots
t_n)]$.
  We prove that all $K'^\tau_C[\pi_C (rt_1\dots t_n\times st_1\dots t_n)]\in\SN$
  by showing, more generally, that if $r'$ and $s'$ are two reducts of
  $rt_1\dots t_n$ and $st_1\dots t_n$, then
  $K'^\tau_C[\pi_C(r'\times s')]\in\SN$. For this, we show that all its one step reducts are
  in $\SN$, by induction on $|K'^\tau_C| + |r'|+|s'|$.
  Full details are given in Appendix~\ref{app:Adequacy}.
\end{proof}

\begin{lemma}[Adequacy of abstraction]\label{lem:lamOfInterp}
  If for all $t\in\interp A$, $r[t/x]\in\interp B$, then
  $\lambda x^A.r\in\interp{A\Rightarrow B}$.
\end{lemma}
\begin{proof}
  By induction on $M(r)$. In the case $r\not\eq^*r_1\times r_2$, 
  we need to prove that for any elimination context $K^\tau_{A\Rightarrow B}$, we have
  $K^\tau_{A\Rightarrow B}[\lambda x^A. r]\in\SN$, and we do so by a second
  induction on $|K^\tau_{A\Rightarrow B}| + |r|$ to show that all
  the one step reducts of $K^\tau_{A\Rightarrow B}[\lambda x^A.r]$ are in $\SN$.
  Full details are given in Appendix~\ref{app:Adequacy}.
\end{proof}

\begin{definition}[Adequate substitution]
  A substitution $\sigma$ is adequate if for all $x\in\V_A$, we have
  $\sigma(x)\in\interp{A}$.
\end{definition}

\begin{theorem}[Adequacy]\label{thm:adequacy}
  If $\ve r:A$, then for all $\sigma$ adequate, we have $\sigma r\in\interp A$.
\end{theorem}
\begin{proof}
  By induction on $r$. 
  Full details are given in Appendix~\ref{app:Adequacy}.
\end{proof}

\begin{theorem}[Strong normalisation]\label{thm:SN}
  If $\ve r:A$, then $\ve r\in\SN$.
\end{theorem}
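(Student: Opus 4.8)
The plan is to derive strong normalisation as an immediate corollary of the Adequacy Lemma (Lemma~\ref{lem:adequacy}), once we supply it with a suitable substitution. The Adequacy Lemma states that whenever $\ve r:A$ and $\sigma$ is adequate, then $\sigma\ve r\in\interp{\can A}$; combined with Lemma~\ref{lem:CR1}, which gives $\interp{\can A}\subseteq\SN$, this will yield $\sigma\ve r\in\SN$ for any adequate $\sigma$. The remaining task is therefore simply to exhibit \emph{one} adequate substitution $\sigma$ such that $\sigma\ve r=\ve r$.

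First I would observe that the identity substitution is adequate. Indeed, an adequate substitution is one with $\sigma(x^A)\in\interp{\can A}$ for every variable $x^A$, and Lemma~\ref{lem:nonEmpty} (whose proof notes that $x^A\vec{\ve s}$ and $\pi_B(x^A)\vec{\ve s}$ are in $\SN$ whenever $\vec{\ve s}$ is) shows precisely that $x^A\in\interp{\can A}$ for every type $A$. Hence taking $\sigma$ to be the identity on all variables gives an adequate substitution, and trivially $\sigma\ve r=\ve r$.

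Then the proof closes in three lines: assume $\ve r:A$; apply Lemma~\ref{lem:adequacy} with the identity substitution to conclude $\ve r\in\interp{\can A}$; apply Lemma~\ref{lem:CR1} to conclude $\ve r\in\SN$.

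There is essentially no obstacle left at this stage — all the genuine difficulty (the handling of the ``confusing'' equivalence, the interplay of the $\delta$-rule with the reducibility candidates, the measure arguments ensuring finiteness of $\eq^*$-classes) has already been absorbed into the auxiliary lemmas and into Lemma~\ref{lem:adequacy} itself. If I were to flag anything, it would only be the bookkeeping point that ``strong normalisation'' here means normalisation of $\toreq$ (reduction modulo $\eq^*$), so one should make sure the definition of $\SN$ used in Lemma~\ref{lem:CR1} is the same as the one in the theorem statement; but since both refer to the set denoted $\SN$ introduced at the start of Section~\ref{sec:SN}, this is immediate.
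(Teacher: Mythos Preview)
Your proposal is correct and matches the paper's own proof essentially line for line: the paper too applies Lemma~\ref{lem:adequacy} with the identity substitution (noting, via the proof of Lemma~\ref{lem:nonEmpty}, that it is adequate) to get $\ve r\in\interp{\can A}$, and then invokes Lemma~\ref{lem:CR1} to conclude $\ve r\in\SN$.
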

\begin{proof}
  By Lemma~\ref{lem:CR1}, the identity substitution is adequate. Thus,
  by Theorem~\ref{thm:adequacy} and Lemma~\ref{lem:CR1}, $r\in\interp{A}\subseteq\SN$.
\end{proof}

\section{Consistency}\label{sec:cons}

\begin{lemma}\label{lem:eqK}
  For any term $r:A$ there exists an elimination context $K^A_B$ and a term
  $s:B$, which is not an elimination, such that $r=K^A_B[s]$.
\end{lemma}
\begin{proof}
  We proceed by structural induction on $r$.
  \begin{itemize}
  \item If $r$ is a variable, an abstraction, or a product, we take $s=r$ and
  $K^A_A=\hole A$.
  \item If $r$ is an application $r_1r_2$, by the induction hypothesis,
    $r_1=K^{C\Rightarrow B}_A[s]$, we take $K'^B_A = K^{C\Rightarrow B}_Ar_2$.
  \item If $r$ is a projection $\pi_A(r')$, by the induction hypothesis,
    $r'=K^{A\wedge C}_B[s]$, we take $K'^A_B = \pi_A(K^{A\wedge C}_B)$.
  \qedhere
  \end{itemize}
\end{proof}

\begin{corollary}\label{cor:consistency}
  There is no closed normal term of type $\tau$.
\end{corollary}
\begin{proof}
  Let $r:\tau$ be a closed normal term.
  By Lemma~\ref{lem:eqK}, any $r=K^\tau_A[s]$,
  where $s$ is not an elimination.
  Since the term is closed, $s$ is not a variable. Thus it is either and
  abstraction or a product. 
  \begin{itemize}
  \item If $A$ is prime, then, $K^\tau_A$ cannot contain a projection, so by rule \rulelabel{(curry)}
   we have $K^\tau_A \eq^* \hole At$, with $t:B$, and $s$ has the form $\lambda
   x^C.s'$ with $s':D\Rightarrow\tau$. We have $B\equiv C\wedge D$.
    By Corollary~\ref{cor:lelemme}, and since $t$ is normal, $t\eq^*
   t_1\times t_2$ where $t_1:C$ and $t_2:D$, so
   $K^\tau_A\eq^* \hole At_1t_2$, hence,
   $r=K^\tau_A[\lambda x^C.s']\eq^*(\lambda
   x^C.s')t_1t_2$ is not normal.
  \item Otherwise, $K^\tau_A = K'^\tau_B[\pi_B(\hole A t_1\dots
    t_n)]$, with $\hole At_1\dots t_n:B\wedge C$.
    Then, by Corollary~\ref{cor:lelemme}, and since $st_1\dots t_n$ is normal, $st_1\dots t_n\eq^*s_1\times s_2$,
    thus $r=K^\tau_A[s]\eq^*K'^\tau_B[\pi_B(s_1\times s_2)]$, which is not normal.
    \qedhere
  \end{itemize}
\end{proof}

\section{Computing with \OC}\label{sec:computing}
\label{sec:pairs}
Because the symbol $\times$ is associative and commutative, \OC does not
contain the usual notion of pairs. However, it is possible to encode a
deterministic projection, even if we have more than one term of the same type.
An example, although there are various possibilities, is
to encode the pairs $\langle r,s \rangle:A\times A$ as
$\lambda x^{\numbertype{1}}.\ve r \times  \lambda x^{\numbertype{2}}.\ve s:\numbertype{1}\Rightarrow A\wedge\numbertype{2}\Rightarrow A$
and the projection $\pi_1\langle r,s \rangle$ as
$\pi_{\numbertype{1}\Rightarrow A}(\lambda x^{\numbertype{1}}.\ve r \times
\lambda x^{\numbertype{2}}.\ve s)y^{\numbertype{1}}$ (similarly for $\pi_2$), 
where types \numbertype{1} and \numbertype{2} are any two different types. This
example uses free variables, but it is easy to close it, e.g.~use $\lambda y.y$
instead of $y^\numbertype{1}$ in the second line.
Moreover, this technique is not limited to pairs. Due to the associativity
of $\times$, the encoding can be easily extended to lists.

\label{sec:bool}
Example~\ref{ex:bool} on booleans overlooks an interesting fact: If $A\equiv B$,
then both $\true$ and $\false$ behave as a non-deterministic projector. Indeed,
$\true\ve r\ve s\re^*\ve r$, but also $(\lambda x^A.\lambda y^A.x)\ve r\ve s
\eq(\lambda x^A.\lambda y^A.x)(\ve r\times \ve s) \eq(\lambda x^A.\lambda
y^A.x)(\ve s\times \ve r) \eq(\lambda x^A.\lambda y^A.x)\ve s\ve r \re^*\ve s$.
Similarly, $\false\ve r\ve s\re^*\ve s$ and also $\false\ve r\ve s\toreq^*\ve
r$. Hence, $A\Rightarrow A\Rightarrow A$ is not suitable to encode the type
Bool. The type $A\Rightarrow A\Rightarrow A$ has only one term in the underlying
equational theory.

Fortunately, there are ways to construct types with more than one term. First,
let us define the following notation. For any $\ve t$, we write $[\ve
t]^{\tau\Rightarrow\tau}$, the {\em canon} of $\ve t$, that is, the term
$\lambda z^{\tau\Rightarrow\tau}.\ve t$, where $z^{\tau\Rightarrow\tau}$ is a
fresh variable not appearing in $\ve t$. Also, for any term $\ve t$ of type
$(\tau\Rightarrow\tau)\Rightarrow A$, we write $\{\ve
t\}^{\tau\Rightarrow\tau}$, the {\em cocanon}, which is the inverse operation,
that is, $\{[\ve t]^{\tau\Rightarrow\tau}\}^{\tau\Rightarrow\tau}\re\ve t$ for
any $\ve t$ of type $A$. For the cocanon it suffices to take $\{\ve
t\}^{\tau\Rightarrow\tau}=\ve t(\lambda x^\tau.x)$. Therefore, the type
$((\tau\Rightarrow\tau)\Rightarrow A)\Rightarrow A\Rightarrow A$ has the
following two different terms:
$\trtr:=\lambda x^A.\lambda y^{(\tau\Rightarrow\tau)\Rightarrow A}.x$
and
$\ff:=\lambda x^{(\tau\Rightarrow\tau)\Rightarrow A}.\lambda y^A.\{x\}^{\tau\Rightarrow\tau}$.
Hence, it is possible to encode an if-then-else conditional expression as
$\sf{If~}\ve c\sf{~then~}\ve r\sf{~else~}\ve s:= \ve c\ve r[\ve s]^{\tau\Rightarrow\tau}$.
Thus, $ \trtr\ve r[\ve s]^{\tau\Rightarrow\tau} \re^*\ve r
$, while $ \ff\ve r[\ve s]^{\tau\Rightarrow\tau} \eq^*\ff[\ve s]^{\tau\Rightarrow\tau}\ve
r \re^*\{[\ve s]^{\tau\Rightarrow\tau}\}^{\tau\Rightarrow\tau} \re\ve s $.

\section{Conclusion, Discussion and Future Work}\label{sec:conclusion}
In this paper we have defined System I, a proof system for propositional logic, where
isomorphic propositions have the same proofs.

\subsection{Non-terminating extension}\label{sec:deltadelta}
As mentioned in the introduction, the choice of rules is subtle. Indeed, as
well known, the strong normalisation of simply typed lambda calculus is
not a very robust property: minor modifications of typing or reduction rules
can lead to non-terminating calculi, see for instance~\cite{DowekJiangIC11}.
In \OC, we have the rule \rulelabel{(dist$_{\mathsf{app}}$)} to deal with the
equivalence $A\Rightarrow(B\wedge C)\equiv (A\Rightarrow B)\wedge(A\Rightarrow
C)$, and we could have also considered a rule such as $\pi_{A}(rs)\eq\pi_{B\Rightarrow A}(r)s$~\cite{DiazcaroDowekLSFA12}.
However, adding such a rule leads to a non-terminating calculus, as shown
by the following example.
  Let
    $\delta =\lambda
    x^{(\tau\Rightarrow\tau)\wedge\tau}.\pi_{\tau\Rightarrow\tau}(x)\pi_\tau(x):((\tau\Rightarrow\tau)\wedge\tau)\Rightarrow\tau$,\ 
    $\delta'=\delta ((z^{\tau\Rightarrow\tau\Rightarrow\tau}y^\tau)\times
    y^\tau):\tau$, and $\Omega=\delta ((z^{\tau\Rightarrow\tau\Rightarrow\tau}y^\tau)\times\delta'):\tau$.
Then, we have
\begin{align*}
  & \Omega\\
 &\re\pi_{\tau\Rightarrow\tau}((zy)\times\delta')\pi_\tau((zy)\times\delta')
 \re\pi_{\tau\Rightarrow\tau}((zy)\times\delta')\delta'
 =\pi_{\tau\Rightarrow\tau}((zy)\times(\delta((zy)\times y)))\delta'\\
 &\eq\pi_{\tau\Rightarrow\tau}((zy)\times(\delta(zy) y))\delta'
 \eq\pi_{\tau\Rightarrow\tau}((z\times(\delta(zy))) y)\delta'
 \!\!\!\stackrel{\rulelabel{(wrong-rule)}}{\eq^*}\!\!\!\pi_{\tau\Rightarrow\tau}((z\times(\delta(zy))) \delta')y\\
 &\eq\pi_{\tau\Rightarrow\tau}((z\delta')\times(\delta(zy)\delta'))y
 \eq\pi_{\tau\Rightarrow\tau}((z\delta')\times(\delta((zy)\times\delta')))y
 =\pi_{\tau\Rightarrow\tau}((z\delta')\times\Omega)y
\end{align*}

\subsection{Other Related Work}

Apart from the related work already discussed in the introduction,
in a work by Garrigue and A\"it-Kaci~\cite{GarrigueAitkaciPOPL94}, the
selective $\lambda$-calculus has been presented, where only the
isomorphism
\begin{equation}\label{iso:ordering}
  (A\Rightarrow B\Rightarrow C) \equiv (B\Rightarrow A\Rightarrow C).
\end{equation}
has been treated, which is complete with respect to the function type. In \OC we
also consider the conjunction, and hence four isomorphisms. Isomorphism
\eqref{iso:ordering} is a consequence of currification and commutation, that is
$A\wedge B\equiv B\wedge A$ and $(A\wedge B)\Rightarrow C\equiv (A\Rightarrow
B\Rightarrow C)$.

The selective $\lambda$-calculus 
includes
labellings to identify which argument is being used at each time. Moreover, by
considering the Church encoding of pairs, isomorphism \eqref{iso:ordering}
implies isomorphism \eqref{iso:comm} (commutativity of $\wedge$). However, their
proposal is different to ours. In particular, we track the term by its type,
which is a kind of labelling, but when two terms have the same type, then we
leave the system to non-deterministically choose any proof. One of our main
novelties is, indeed, the non-deterministic projector. However, we can also get
back determinism, by encoding a labelling, as discussed in Section
\ref{sec:computing}, or by dropping some isomorphisms (namely,
associativity and commutativity of conjunction).

\subsection{Towards more connectives}
A subtle question is how to add a neutral element of the conjunction, which will
imply more isomorphisms, e.g.~$A\wedge\top\equiv A$,
$A\Rightarrow\top\equiv\top$ and $\top\Rightarrow A\equiv A$. Adding the
equation $\top\Rightarrow\top\equiv\top$ would make it possible to derive
$(\lambda x^{\top}.xx)(\lambda x^{\top}.xx):\top$, however this term is not the
classical $\Omega$, it is typed by $\top$, and imposing some restrictions on the
beta reduction, it could be forced not to reduce to itself but to discard its
argument. For example: ``If $A\equiv\top$, then $(\lambda x^A.\ve r)\ve s\re
r[\star/x]$'', where $\star:\top$ is the introduction rule of $\top$.

\subsection{Eta-expansion rule}
In~\cite{DiazcaroMartinezlopezIFL15} we have given an implementation embedded in
Haskell of an extended fragment of the system as presented
in~\cite{DiazcaroDowekLSFA12}, which is an early version of \OC. In such an
implementation, we have added some rules in order to have only introductions as
normal forms. For example,
``If $s:B$ then $(\lambda x^A.\lambda y^B.r)s\re\lambda x^A.((\lambda y^B.r)s)$.
Such a rule, among others introduced in this implementation, is a particular case of
a more general $\eta$-expansion rule. Indeed, with the rule''
``If $t:A\Rightarrow B$ then $t\re \lambda x^A.tx$''
we can derive
  $(\lambda x^A.\lambda y^B.r)s
  \re
  \lambda z^A.
  (\lambda x^A.\lambda y^B.r)sz
  \eq^*
  \lambda z^A.
  (\lambda x^A.\lambda y^B.r)zs
  \re
  \lambda z^A.((\lambda y^B.r[z/x])s)$.
  
Indeed, we conjecture that \OC extended with an $\eta$-expansion rule would lead
to a system where there is no closed elimination term in normal form.
Such an extension is left for future work.

\bibliography{biblio}

\newpage
\appendix
\section{Detailed proofs of Section~\ref{sec:SR}}\label{app:SR}

\xrecap{Lemma}{Substitution}{lem:substitution}{If $\ve r:A$, $\ve s:B$, and $x\in\V_B$, then $\ve r[\ve s/x]:A$.}
\begin{proof}
  By structural induction on $\ve r$.
  \begin{itemize}
  \item Let $\ve r=x$. By Lemma~\ref{lem:generation}, $A\equiv B$,
    thus $\ve s:A$. We have $x[\ve s/x]=\ve s$, so $x[\ve s/x]:A$.

  \item Let $\ve r=y$, with $y\neq x$. We have $y[\ve s/x]=y$, so
    $y[\ve s/x]:A$.

  \item Let $\ve r=\lambda y^C.\ve r'$. By Lemma~\ref{lem:generation}, $A\equiv
    C\Rightarrow D$, with $\ve r':D$. By the induction
    hypothesis $\ve r'[\ve s/x]:D$, and so, by rule
    $(\Rightarrow_i)$, $\lambda y^C.\ve r'[\ve s/x]:C\Rightarrow
    D$. Since $\lambda y^C.\ve r'[\ve s/x]=(\lambda y^C.\ve r')[\ve s/x]$,
    using rule $(\equiv)$, $(\lambda y^C.\ve r')[\ve s/x]:A$.

  \item Let $\ve r=\ve r_1\ve r_2$. By Lemma~\ref{lem:generation},
    $\ve r_1:C\Rightarrow A$ and $\ve r_2:C$. By
    the induction hypothesis $\ve r_1[\ve s/x]:C\Rightarrow A$ and
    $\ve r_2[\ve s/x]:C$, and so, by rule $(\Rightarrow_e)$,
    $(\ve r_1[\ve s/x])(\ve r_2[\ve s/x]):A$. Since $(\ve
    r_1[\ve s/x])(\ve r_2[\ve s/x])=(\ve r_1\ve r_2)[\ve s/x]$, we have
    $(\ve r_1\ve r_2)[\ve s/x]:A$.

  \item Let $\ve r=\ve r_1\times \ve r_2$. By Lemma~\ref{lem:generation},
    $\ve r_1:A_1$ and $\ve r_2:A_2$, with
    $A\equiv A_1\wedge A_2$. By the induction hypothesis $\ve
    r_1[\ve s/x]:A_1$ and $\ve r_2[\ve s/x]:A_2$, and so, by
    rule $(\wedge_i)$, $(\ve r_1[\ve s/x])\times (\ve r_2[\ve
    s/x]):A_1\wedge A_2$. Since $(\ve r_1[\ve s/x])\times (\ve r_2[\ve
    s/x])=(\ve r_1\times \ve r_2)[\ve s/x]$, using rule $(\equiv)$, we have
    $(\ve r_1\times \ve r_2)[\ve s/x]:A$.

  \item Let $\ve r=\pi_A(\ve r')$. By Lemma~\ref{lem:generation},
    $\ve r':A\wedge C$. Hence, by the induction hypothesis,
    $\ve r'[\ve s/x]:A\wedge C$. Hence, by rule $\wedge_e$,
    $\pi_A(\ve r'[\ve s/x]):A$. Since $\pi_A(\ve r'[\ve
    s/x])=\pi_A(\ve r')[\ve s/x]$, we have $\pi_A(\ve r')[\ve
    s/x]:A$. \qedhere
  \end{itemize}
\end{proof}

\xrecap{Theorem}{Subject reduction}{thm:SR}{If $\ve r:A$ and $\ve r\re\ve s$ or $\ve r\eq\ve s$ then $\ve s:A$.}

\begin{proof}
  By induction on the rewrite relation.
  \begin{itemize}
  \item \rulelabel{(comm)}: 
    If $\ve r\times
    \ve s:A$, then by Lemma~\ref{lem:generation}, $A\equiv A_1\wedge A_2\equiv
    A_2\wedge A_1$, with $\ve r:A_1$ and $\ve s:A_2$.
    Then, $s\times r:A_2\wedge A_1\equiv A$.
  \item \rulelabel{(asso)}: 
    ~
    \begin{description}
    \item[$(^{\to})$] If $(\ve r\times \ve s)\times \ve t:A$, then
      by Lemma~\ref{lem:generation}, $A\equiv (A_1\wedge A_2)\wedge A_3\equiv
      A_1\wedge(A_2\wedge A_3)$, with $\ve r:A_1$, $\ve
      s:A_2$ and $\ve t:A_3$. Then,
      $\ve r\times (\ve s\times \ve t):A_1\wedge(A_2\wedge A_3)\equiv A$.
    \item[$(_{\leftarrow})$] Analogous to $(^{\to})$.
    \end{description}
  \item \rulelabel{(dist$_\lambda$)}:~
    \begin{description}
    \item[$(^{\to})$] If $\lambda x^B.(\ve r\times \ve s):A$, then
      by Lemma~\ref{lem:generation}, $A\equiv (B\Rightarrow(C_1\wedge
      C_2))\equiv ((B\Rightarrow C_1)\wedge(B\Rightarrow C_2))$, with
      $\ve r:C_1$ and $\ve s:C_2$. Then,
${\lambda x^B.\ve r\times \lambda x^B.\ve s:(B\Rightarrow C_1)\wedge(B\Rightarrow C_2)}\equiv A$.
    \item[$(_{\leftarrow})$] If $\lambda x^B.\ve r\times \lambda
      x^B.\ve s:A$, then by Lemma~\ref{lem:generation}, $A\equiv((B\Rightarrow
      C_1)\wedge(B\Rightarrow C_2))\equiv (B\Rightarrow(C_1\wedge C_2))$, with
      $\ve r:C_1$ and $\ve s:C_2$. Then,
${\lambda x^B.(\ve r\times \ve
            s):B\Rightarrow(C_1\wedge C_2)}\equiv A$.
    \end{description}

  \item \rulelabel{(dist$_{\mathsf{app}}$)}:~ 
    \begin{description}
    \item[$(^{\to})$] If $(\ve r\times \ve s)\ve t:A$, then by
      Lemma~\ref{lem:generation}, $\ve r\times \ve s:B\Rightarrow
      A$, and $\ve t:B$. Hence, by Lemma~\ref{lem:generation} again,
      $B\Rightarrow A\equiv C_1\wedge C_2$, and so by Lemma~\ref{lem:ImpConj},
      $A\equiv A_1\wedge A_2$, with $\ve r:B\Rightarrow A_1$ and
      $\ve s:B\Rightarrow A_2$. Then,
   $\ve r\ve t\times \ve s\ve t:A_1\wedge A_2\equiv A$.
    \item[$(_{\leftarrow})$] If $\ve r\ve t\times \ve s\ve t:A$,
      then by Lemma~\ref{lem:generation}, $A\equiv A_1\wedge A_2$ with
      $\ve r:B\Rightarrow A_1$, $\ve s:B'\Rightarrow
      A_2$, $\ve t:B$ and $\ve t:B'$. By
      Lemma~\ref{lem:unicity}, $B\equiv B'$. Then
          ${(\ve r\times \ve s)\ve t:A_1\wedge A_2}\equiv A$.
    \end{description}
  \item \rulelabel{(curry)}:~ 
    \begin{description}
    \item[$(^{\to})$] If $\ve r\ve s\ve t:A$, then by
      Lemma~\ref{lem:generation}, $\ve r:B\Rightarrow C\Rightarrow
      A\equiv(B\wedge C)\Rightarrow A$, $\ve s:B$ and
      $\ve t:C$. Then,
      $r(s\times t):A$.
    \item[$(_{\leftarrow})$] If $\ve r(\ve s\times \ve t):A$, then
      by Lemma~\ref{lem:generation}, $\ve r:(B\wedge C)\Rightarrow
      A\equiv (B\Rightarrow C\Rightarrow A)$, $\ve s:B$ and
      $\ve t:C$. Then $rst:A$.
    \end{description}

  \item \rulelabel{($\beta$)}: 
    If $(\lambda x^B.\ve r)\ve s:A$, then by
    Lemma~\ref{lem:generation}, $\lambda x^B.\ve r:B\Rightarrow A$,
    and by Lemma~\ref{lem:generation} again, $\ve r:A$. Then by
    Lemma~\ref{lem:substitution}, $\ve r[\ve s/x^B]:A$.

  \item \rulelabel{($\pi$)}:
    If
    $\pi_B(\ve r\times \ve s):A$, then by
    Lemma~\ref{lem:generation}, $A\equiv B$, and so, by rule $(\equiv)$,
    $\ve r:A$.

  \item Contextual closure: Let $\ve t\to\ve r$, where $\to$ is either $\eq$ or
    $\re$.
    \begin{itemize}
    \item Let $\lambda x^B.\ve t\to\lambda x^B.\ve r$: If $\lambda
      x^B.\ve t:A$, then by Lemma~\ref{lem:generation}, $A\equiv (B\Rightarrow
      C)$ and $\ve t:C$, hence by the induction hypothesis,
      $\ve r:C$ and so $\lambda x^B.\ve
      r:B\Rightarrow C\equiv A$.
    \item Let $\ve t\ve s\to\ve r\ve s$: If $\ve t\ve s:A$ then by
      Lemma~\ref{lem:generation}, $\ve t:B\Rightarrow A$ and
      $\ve s:B$, hence by the induction hypothesis, $\ve
      r:B\Rightarrow A$ and so $\ve r\ve s:A$.
    \item Let $\ve s\ve t\to\ve s\ve t$:  If $\ve s\ve t:A$ then by
      Lemma~\ref{lem:generation}, $\ve s:B\Rightarrow A$ and
      $\ve t:B$, hence by the induction hypothesis $\ve
      r:B$ and so $\ve s\ve r:A$.
    \item Let $\ve t\times \ve s\to\ve r\times \ve s$:  If $\ve t\times
      \ve s:A$ then by Lemma~\ref{lem:generation}, $A\equiv A_1\wedge A_2$,
      $\ve t:A_1$, and $\ve s:A_2$, hence by the
      induction hypothesis, $\ve r:A_1$ and so $\ve
      r\times \ve s:A_1\wedge A_2\equiv A$.
    \item Let $\ve s\times \ve t\to\ve s\times \ve r$:  Analogous to previous case.
    \item Let $\pi_B(\ve t)\to\pi_B(\ve r)$:  If $\pi_B(\ve t):A$ then
      by Lemma~\ref{lem:generation}, $A\equiv B$ and $\ve t:B\wedge
      C$, hence by the induction hypothesis $\ve r:B\wedge C$.
      Therefore, $\pi_B(\ve r):B\equiv A$.
      \qedhere
    \end{itemize}
  \end{itemize}
\end{proof}
\section{Detailed proofs of Section~\ref{sec:SN}}

\subsection{Detailed proofs of Section~\ref{sec:RP}}\label{app:RP}
\recap{Lemma}{lem:eqProd}{If $\ve r\times \ve s\eq^*\ve t$ then either
  \begin{enumerate}
  \item $\ve t=\ve u\times \ve v$ where either
    \begin{enumerate}
    \item $\ve u\eq^*\ve t_{11}\times \ve t_{21}$ and $\ve v\eq^*\ve
      t_{12}\times \ve t_{22}$ with $\ve r\eq^*\ve t_{11}\times \ve t_{12}$ and
      $\ve s\eq^*\ve t_{21}\times \ve t_{22}$, or
    \item $\ve v\eq^*\ve w\times \ve s$ with $\ve r\eq^*\ve u\times \ve w$, or
      any of the three symmetric cases, or
    \item $\ve r\eq^*\ve u$ and $\ve s\eq^*\ve v$, or the symmetric case.
    \end{enumerate}
  \item $\ve t=\lambda x^A.\ve a$ and $\ve a\eq^*\ve a_1\times \ve
    a_2$ with $\ve r\eq^*\lambda x^A.\ve a_1$ and $\ve s\eq^*\lambda x^A.\ve
    a_2$.
  \item $\ve t=\ve a\ve v$ and $\ve a\eq^*\ve a_1\times \ve a_2$,
    with $\ve r\eq^*\ve a_1\ve v$ and $\ve s\eq^*\ve a_2\ve v$.
  \end{enumerate}
}
\begin{proof}
  By a double induction, first on $M(\ve t)$ and then on the length of the relation
  $\eq^*$. Consider an equivalence proof $\ve r \times \ve s \eq^*\ve t' \eq\ve
  t$ with a shorter proof $\ve r \times \ve s \eq^*\ve t'$. By the second
  induction hypothesis, the term $\ve t'$ has the form prescribed by the lemma.
  We consider the three cases and in each case, the possible rules transforming
  $\ve t'$ in $\ve t$.
  \begin{enumerate}
  \item Let $\ve r\times \ve s\eq^*\ve u\times \ve v\eq\ve t$. The possible
    equivalences from $\ve u\times \ve v$ are
    \begin{itemize}
    \item $\ve t=\ve u'\times \ve v$ or $\ve u\times \ve v'$ with $\ve u\eq\ve
      u'$ and $\ve v\eq\ve v'$, and so the term $\ve t$ is in case \ref{it:eqProd-sum}.
    \item Rules \rulelabel{(comm)} and \rulelabel{(asso)} preserve the conditions of case
      \ref{it:eqProd-sum}.
    \item $\ve t=\lambda x^A.(\ve u'\times \ve v')$, with $\ve u=\lambda x^A.\ve
      u'$ and $\ve v=\lambda x^A.\ve v'$. By the first induction hypothesis
      (since $M(\ve u)<M(\ve t)$ and $M(\ve v)<M(\ve t)$), either
      \begin{enumerate}
      \item $\ve u\eq^*\ve w_{11}\times \ve w_{21}$ and $\ve v\eq^*\ve
        w_{12}\times \ve w_{22}$, by the first induction hypothesis, $\ve
        w_{ij}\eq^*\lambda x^A.\ve t_{ij}$ for $i=1,2$ and $j=1,2$, with $\ve
        u'\eq^*\ve t_{11}\times \ve t_{21}$ and $\ve v'\eq^*\ve t_{12}\times \ve
        t_{22}$, so $\ve u'\times \ve v'\eq^*\ve t_{11}\times \ve t_{12}\times
        \ve t_{21}\times \ve t_{22}$. Hence, $\ve r\eq^*\lambda x^A.(\ve
        t_{11}\times \ve t_{12})$ and $\ve s\eq^*\lambda x^A.(\ve t_{21}\times
        \ve t_{22})$, and hence the term $t$ is in case \ref{it:eqProd-lam}.
      \item $\ve v\eq^*\ve w\times \ve s$ and $\ve r\eq^*\ve u\times \ve w$.
        Since $\ve v\eq^*\lambda x^A.\ve v'$, by the first induction hypothesis,
        $\ve w\eq^*\lambda x^A.\ve t_1$ and $\ve s\eq^*\lambda x^A.\ve t_2$,
        with $\ve v'\eq^*\ve t_1\times \ve t_2$. Hence, $\ve r\eq^*\lambda
        x.(\ve u'\times \ve t_1)$, and hence the term $t$ is in case \ref{it:eqProd-lam}.
      \item $\ve r\eq^*\lambda x^A.\ve u'$ and $\ve s\eq^*\lambda x^A.\ve v$,
        and hence the term $t$ is in case \ref{it:eqProd-lam}.
      \end{enumerate} (the symmetric cases are analogous).
    \item $\ve t=(\ve u'\times \ve v')\ve t'$, with $\ve u=\ve u'\ve t'$ and
      $\ve v=\ve v'\ve t'$. By the first induction hypothesis (since $M(\ve
      u)<M(\ve t)$ and $M(\ve v)<M(\ve t)$), either
      \begin{enumerate}
      \item $\ve u\eq^*\ve w_{11}\times \ve w_{21}$, $\ve v\eq^*\ve w_{12}\times
        \ve w_{22}$, $\ve r\eq^*\ve w_{11} \times \ve w_{12}$, and $\ve s \eq^*
        \ve w_{21} \times \ve w_{22}$. By the first induction hypothesis (since
        $M(\ve w_{ij})<M(\ve t)$), $\ve w_{ij}\eq^*\ve t_{ij}\ve t'$, for
        $i=1,2$ and $j=1,2$, where $\ve u'\eq^*\ve t_{11}\times \ve t_{21}$,
        $\ve v'\eq^*\ve t_{12}\times \ve t_{22}$, $\ve r\eq^*\ve w_{11}\times
        \ve w_{12}$ and $\ve s\eq^*\ve w_{21}\times \ve w_{22}$. Therefore, $\ve
        u\eq^*\ve t_{11}\ve t'\times \ve t_{21}\ve t'$ and $\ve v\eq^*\ve
        t_{12}\ve t'\times \ve t_{22}\ve t'$ with $\ve r\eq^*(\ve t_{11}\times
        \ve t_{12})\ve t'$ and $\ve s\eq^*(\ve t_{21}\times \ve t_{22})\ve t'$,
        and hence the term $t$ is in case \ref{it:eqProd-app}.
      \item $\ve v'\ve t'\eq^*\ve w\times \ve s$ and $\ve r\eq^*\ve u'\ve
        t'\times \ve w$. By the first induction hypothesis on $\ve v'\ve
        t'\eq^*\ve w\times \ve s$ (since $M(\ve w)<M(\ve t)$ and $M(\ve s)<M(\ve
        t)$), we have $\ve w\eq^*\ve t_1\ve t'$ and $\ve s\eq^*\ve t_2\ve t'$
        with $\ve t_1\times \ve t_2\eq^*\ve v'$. Therefore, $\ve v\eq^*\ve
        t_1\ve t'\times \ve t_2\ve t'$ with $\ve r\eq^*(\ve u\times \ve t_1)\ve
        t'$ and $\ve s\eq^*\ve t_2\ve t'$, and $\ve u'\times \ve v'\eq^*\ve
        u'\times \ve t_1\times \ve t_2$, hence the term $t$ is in case \ref{it:eqProd-app}.
      \item $\ve r\eq^*\ve u'\ve t'$ and $\ve s\eq^*\ve v'\ve t'$, and hence we
        are in case \ref{it:eqProd-app}.
      \end{enumerate} (the symmetric cases are analogous).
    \end{itemize}
  \item Let $\ve r\times \ve s\eq^*\lambda x^A.\ve a\eq\ve t$, with $\ve
    a\eq^*\ve a_1\times \ve a_2$, $\ve r\eq^*\lambda x^A.\ve a_1$, and $\ve
    s\eq^*\lambda x^A.\ve a_2$. Hence, possible equivalences from $\lambda
    x.\ve a$ to $\ve t$ are
    \begin{itemize}
    \item $\ve t=\lambda x^A.\ve a'$ with $\ve a\eq^*\ve a'$, hence $\ve
      a'\eq^*\ve a_1\times \ve a_2$, and so the term $t$ is in case \ref{it:eqProd-lam}.
    \item $\ve t=\lambda x^A.\ve u\times \lambda x^A.\ve v$, with $\ve a_1\times
      \ve a_2\eq^*\ve a=\ve u\times \ve v$. Hence, by the first induction
      hypothesis (since $M(\ve a)<M(\ve t)$), either
      \begin{enumerate}
      \item $\ve a_1\eq^*\ve u$ and $\ve a_2\eq^*\ve v$, and so $\ve
        r\eq^*\lambda x^A.\ve u$ and $\ve s\eq^*\lambda x^A.\ve v$, or
      \item $\ve v\eq^*\ve t_1\times \ve t_2$ with $\ve a_1\eq^*\ve u\times \ve
        t_1$ and $\ve a_2\eq^*\ve t_2$, and so $\lambda x^A.\ve v\eq^*\lambda
        x.\ve t_1\times \lambda x^A.\ve t_2$, $\ve r\eq^*\lambda x^A.\ve
        u\times \lambda x^A.\ve t_1$ and $\ve s\eq^*\lambda x^A.\ve t_2$, or
      \item $\ve u\eq^*\ve t_{11}\times \ve t_{21}$ and $\ve v\eq^*\ve
        t_{12}\times \ve t_{22}$ with $\ve a_1\eq^*\ve t_{11}\times \ve t_{12}$
        and $\ve a_2\eq^*\ve t_{21}\times \ve t_{22}$, and so $\lambda x^A.\ve
        u\eq^*\lambda x^A.\ve t_{11}\times \lambda x^A.\ve t_{21}$, $\lambda
        x.\ve v\eq^*\lambda x^A.\ve t_{12}\times \lambda x^A.\ve t_{22}$, $\ve
        r\eq^*\lambda x^A.\ve t_{11}\times \lambda x^A.\ve t_{12}$ and $\ve
        s\eq^*\lambda x^A.\ve t_{21}\times \lambda x^A.\ve t_{22}$.
      \end{enumerate} (the symmetric cases are analogous), and so the term $t$
      is in case \ref{it:eqProd-sum}.
    \end{itemize}
  \item Let $\ve r\times \ve s\eq^*\ve a\ve w\eq\ve t$, with $\ve a\eq^*\ve
    a_1\times \ve a_2$, $\ve r\eq^*\ve a_1\ve w$, and $\ve s\eq^*\ve a_2\ve w$.
    The possible equivalences from $\ve a\ve w$ to $\ve t$ are
    \begin{itemize}
    \item $\ve t=\ve a'\ve w$ with $\ve a\eq^*\ve a'$, hence $\ve a'\eq^*\ve
      a_1\times \ve a_2$, and so the term $t$ is in case \ref{it:eqProd-app}.
    \item $\ve t=\ve a\ve w'$ with $\ve w\eq^*\ve w'$ and so the term $t$ is in case \ref{it:eqProd-app}.
    \item $\ve t=\ve u\ve w\times \ve v\ve w$, with $\ve a_1\times \ve
      a_2\eq^*\ve a=\ve u\times \ve v$. Hence, by the first induction hypothesis
      (since $M(\ve a)<M(\ve t)$), either
      \begin{enumerate}
      \item $\ve a_1\eq^*\ve u$ and $\ve a_2\eq^*\ve v$, and so $\ve r\eq^*\ve
        u\ve w$ and $\ve s\eq^*\ve v\ve w$, or
      \item $\ve v\eq^*\ve t_1\times \ve t_2$ with $\ve a_1\eq^*\ve u\times \ve
        t_1$ and $\ve a_2\eq^*\ve t_2$, and so $\ve v\ve w\eq^*\ve t_1\ve
        w\times \ve t_2\ve w$, $\ve r\eq^*\ve u\ve w\times \ve t_1\ve w$ and
        $\ve s\eq^*\ve t_2\ve w$, or
      \item $\ve u\eq^*\ve t_{11}\times \ve t_{21}$ and $\ve v\eq^*\ve
        t_{12}\times \ve t_{22}$ with $\ve a_1\eq^*\ve t_{11}\times \ve t_{12}$
        and $\ve a_2\eq^*\ve t_{21}\times \ve t_{22}$, and so $\ve u\ve
        w\eq^*\ve t_{11}\ve w\times \ve t_{21}\ve w$, $\ve v\ve w\eq^*\ve
        t_{12}\ve w\times \ve t_{22}\ve w$, $\ve r\eq^*\ve t_{11}\ve w\times \ve
        t_{12}\ve w$ and $\ve s\eq^*\ve t_{21}\ve w\times \ve t_{22}\ve w$.
      \end{enumerate} (the symmetric cases are analogous), and so the term $t$
      is in case \ref{it:eqProd-sum}.
    \item $\ve t=\ve a'(\ve v\times \ve w)$ with $\ve a=\ve a'\ve v$, thus $\ve
      a'\ve v=\ve a\eq^*\ve a_1\times \ve a_2$. Hence, by the first induction
      hypothesis, $\ve a'\eq^*\ve a'_1\times \ve a'_2$, with $\ve a_1\eq^*\ve
      a'_1\ve v$ and $\ve a_2\eq^*\ve a'_2\ve v$. Therefore, $\ve r\eq^*\ve
      a'_1(\ve v\times \ve w)$ and $\ve s\eq^*\ve a'_2(\ve v\times \ve w)$, and
      so the term $t$ is in case \ref{it:eqProd-app}.
      \qedhere
    \end{itemize}
  \end{enumerate}
\end{proof}

\recap{Lemma}{lem:reProd}{If $\ve r_1\times \ve r_2\eq^*\ve s\re\ve t$, there exists $\ve u_1$, $\ve u_2$ such that $\ve t\eq^*\ve u_1\times \ve u_2$ and either ($\ve r_1\toreq\ve u_1$ and $\ve r_2\toreq\ve u_2$), or ($\ve r_1\toreq\ve u_1$ and $\ve r_2\eq^*\ve u_2$), or ($\ve r_1\eq^*\ve u_1$ and $\ve r_2\toreq\ve u_2$).}
\begin{proof}
  By induction on $M(\ve r_1\times \ve r_2)$. By Lemma~\ref{lem:eqProd}, $\ve s$
  is either a product, an abstraction or an application with the conditions
  given in the lemma. The different terms $\ve s$ reducible by $\re$ are
  \begin{itemize}
  \item $(\lambda x^A.\ve a)\ve s'$ that reduces by the \rulelabel{($\beta$)} rule to $\ve
    a[\ve s'/x]$.
  \item $\ve s_1\times \ve s_2$, $\lambda x^A.\ve a$, $\ve a\ve s'$, with a
    reduction in the subterm $\ve s_1$, $\ve s_2$, $\ve a$, or $\ve s'$.
  \end{itemize}
  Notice that rule \rulelabel{($\pi$)} cannot apply since $\ve
  s\not\eq^*\pi_C(\ve s')$.

  We consider each case:
  \begin{itemize}
  \item $\ve s=(\lambda x^A.\ve a)\ve s'$ and $\ve t=\ve a[\ve s'/x]$. Using
    twice Lemma~\ref{lem:eqProd}, we have $\ve a\eq^*\ve a_1\times \ve a_2$,
    $\ve r_1\eq^*(\lambda x^A.\ve a_1)\ve s'$ and $\ve r_2\eq^*(\lambda x^A.\ve
    a_2)\ve s'$. Since $\ve t\eq^*\ve a_1[\ve s'/x]\times \ve a_2[\ve
    s'/x]$, we take $\ve u_1=\ve a_1[\ve s'/x]$ and $\ve u_2=\ve a_2[\ve
    s'/x]$.
  \item $\ve s=\ve s_1\times \ve s_2$, $\ve t=\ve t_1\times \ve s_2$ or $\ve
    t=\ve s_1\times \ve t_2$, with $\ve s_1\re\ve t_1$ and $\ve s_2\re\ve t_2$.
    We only consider the first case since the other is analogous. One of the
    following cases happen
    \begin{enumerate}
    \item[(a)] $\ve r_1\eq^*\ve w_{11}\times \ve w_{21}$, $\ve r_2\eq^*\ve
      w_{12}\times \ve w_{22}$, $\ve s_1=\ve w_{11}\times \ve w_{12}$ and $\ve
      s_2=\ve w_{21}\times \ve w_{22}$. Hence, by the induction hypothesis,
      either $\ve t_1=\ve w'_{11}\times \ve w_{12}$, or $\ve t_1=\ve
      w_{11}\times \ve w'_{12}$, or $\ve t_1=\ve w'_{11}\times \ve w'_{12}$,
      with $\ve w_{11}\re\ve w_{11}'$ and $\ve w_{12}\re\ve w_{12}'$. We take,
      in the first case $\ve u_1=\ve w_{11}'\times \ve w_{21}$ and $\ve u_2=\ve
      w_{12}\times \ve w_{22}$, in the second case $\ve u_1=\ve w_{11}\times \ve
      w_{21}$ and $\ve u_2=\ve w_{12}'\times \ve w_{22}$, and in the third $\ve
      u_1=\ve w'_{11}\times \ve w_{21}$ and $\ve u_2=\ve w_{12}'\times \ve
      w_{22}$.
    \item[(b)] We consider two cases, since the other two are symmetric.
      \begin{itemize}
      \item $\ve r_1\eq^*\ve s_1\times \ve w$ and $\ve s_2\eq^*\ve w\times \ve
        r_2$, in which case we take $\ve u_1=\ve t_1\times \ve w$ and $\ve
        u_2=\ve r_2$.
      \item $\ve r_2\eq^*\ve w\times \ve s_2$ and $\ve s_1=\ve r_1\times \ve w$.
        Hence, by the induction hypothesis, either $\ve t_1=\ve r'_1\times \ve
        w$, or $\ve t_1=\ve r_1\times \ve w'$ or $\ve t_1=\ve r'_1\times \ve
        w'$, with $\ve r_1\re\ve r_1'$ and $\ve w\re\ve w'$. We take, in the
        first case $\ve u_1=\ve r'_1$ and $\ve u_2=\ve w\times \ve s_2$, in the
        second case $\ve u_1=\ve r_1$ and $\ve u_2=\ve w'\times \ve s_2$, and in
        the third case $\ve u_1=\ve r'_1$ and $\ve u_2=\ve w'\times \ve s_2$.
      \end{itemize}
    \item[(c)] $\ve r_1\eq^*\ve s_1$ and $\ve r_2\eq^*\ve s_2$, in which case we
      take $\ve u_1=\ve t_1$ and $\ve u_2=\ve s_2$.

    \end{enumerate}
  \item $\ve s=\lambda x^A.\ve a$, $\ve t=\lambda x^A.\ve t'$, and $\ve a\re\ve
    t'$, with $\ve a\eq^*\ve a_1\times \ve a_2$ and $\ve s\eq^*\lambda x^A.\ve
    a_1\times \lambda x^A.\ve a_2$. Therefore, by the induction hypothesis,
    there exists $\ve u'_1$, $\ve u'_2$ such that either ($\ve a_1\toreq\ve
    u'_1$ and $\ve a_2\toreq\ve u'_2$), or ($\ve a_1\eq^*\ve u'_1$ and $\ve
    a_2\toreq\ve u'_2$), or ($\ve a_1\toreq\ve u'_1$ and $\ve a_2\eq^*\ve
    u'_2$). Therefore, we take $\ve u_1=\lambda x^A.\ve u_1'$ and $\ve
    u_2=\lambda x^A.\ve u_2'$.
  \item $\ve s=\ve a\ve s'$, $\ve t=\ve t'\ve s'$, and $\ve a\re\ve t'$, with
    $\ve a\eq^*\ve a_1\times \ve a_2$ and $\ve s\eq^*\ve a_1\ve s'\times \ve
    a_2\ve s'$. Therefore, by the induction hypothesis, there exists $\ve u'_1$,
    $\ve u'_2$ such that either ($\ve a_1\toreq\ve u'_1$ and $\ve a_2\toreq\ve
    u'_2$), or ($\ve a_1\eq^*\ve u'_1$ and $\ve a_2\toreq\ve u'_2$), or ($\ve
    a_1\toreq\ve u'_1$ and $\ve a_2\eq^*\ve u'_2$). Therefore, we take $\ve
    u_1=\ve u_1'\ve s'$ and $\ve u_2=\ve u_2'\ve s'$.
  \item $\ve s=\ve a\ve s'$, $\ve t=\ve a\ve t'$, and $\ve s'\re\ve t'$, with
    $\ve a\eq^*\ve a_1\times \ve a_2$ and $\ve s\eq^*\ve a_1\ve s'\times \ve
    a_2\ve s'$. By Lemma~\ref{lem:eqProd} several times, one the following cases
    happen
    \begin{enumerate}
    \item[(a)] $\ve a_1\ve s'\eq^*\ve w_{11}\ve s'\times \ve w_{12}\ve s'$, $\ve
      a_2\ve s'\eq^*\ve w_{21}\ve s'\times \ve w_{22}\ve s'$, $\ve r_1\eq^*\ve
      w_{11}\ve s'\times \ve w_{21}\ve s'$ and $\ve r_2\eq^*\ve w_{12}\ve
      s'\times \ve w_{22}\ve s'$. We take $\ve u_1\eq^*(\ve w_{11}\times \ve
      w_{21})\ve t'$ and $\ve r_2\eq^*(\ve w_{12}\times \ve w_{22})\ve t'$.
    \item[(b)] $\ve a_2\ve s'\eq^*\ve w_1\ve s'\times \ve w_2\ve s'$, $\ve
      r_1\eq^*\ve a_1\ve s'\times \ve w_2\ve s'$ and $\ve r_2\eq^*\ve w_2\ve
      s'$. So we take $\ve u_1=(\ve a_1\times \ve w_1)\ve t'$ and $\ve u_2=\ve
      w_2\ve t'$, the symmetric cases are analogous.
    \item[(c)] $\ve r_1\eq^*\ve a_1\ve s'$ and $\ve r_2\eq^*\ve a_2\ve s'$, in
      which case we take $\ve u_1=\ve a_1\ve t'$ and $\ve u_2=\ve a_2\ve t'$ the
      symmetric case is analogous. \qedhere
    \end{enumerate}
  \end{itemize}
\end{proof}

\subsection{Detailed proofs of Section~\ref{sec:Adequacy}}\label{app:Adequacy}
\xrecap{Lemma}{Adequacy of product}{lem:adequacyOfProd}{If $r\in\interp A$ and $s\in\interp B$, then $\ve r\times \ve s\in\interp{A\wedge B}$.}
\begin{proof}
  We need to prove that $K^\tau_{A\wedge B}[r\times s]\in\SN$.
  We proceed by induction on the number of projections in $K^\tau_{A\wedge B}$.
  Since the hole of $K^\tau_{A\wedge B}$ has type $A\wedge B$, and $K^\tau_{A\wedge B}[t]$ has type
  $\tau$ for any $t:A$ there is at least one projection.
  
  As $r\in\interp A$, for any elimination context $K'^\tau_A$, we have
  $K'^\tau_A[r]\in\SN$, but then if $A\equiv B_1\Rightarrow\dots\Rightarrow
  B_n\Rightarrow C$, we also have $K''^\tau_C[rt_1\dots t_n]\in\SN$, thus
  $rt_1\dots t_n\in\interp C$.
  Similarly, since $s\in\interp B$, $st_1\dots t_n$ is reducible.
  
  We prove that $K'^\tau_C[\pi_C (rt_1\dots t_n\times st_1\dots t_n)]\in\SN$
  by showing, more generally, that if $r'$ and $s'$ are two reducts of
  $rt_1\dots t_n$ and $st_1\dots t_n$, then
  $K'^\tau_C[\pi_C(r'\times s')]\in\SN$. For this, we show that all its one step reducts are
  in $\SN$, by induction on $|K'^\tau_C| + |r'|+|s'|$.
  \begin{itemize}
  \item If the reduction takes place in one of the terms in $\T(K'^\tau_C)$, in
    $r'$, or in $s'$, we apply the induction hypothesis.
  \item Otherwise, the reduction is a \rulelabel{($\pi$)} reduction of $\pi_C
    (r'\times s')$, that is, $r'\times s'\eq^*v\times w$, the reduct is $v$, and we need to prove
    $K'^\tau_C[v]\in\SN$.
    By Lemma~\ref{lem:eqProd}, we have either:
    \begin{itemize}
    \item $v\eq^*r_1\times s_1$, with $r'\eq^*r_1\times r_2$ and
      $s'\eq^*s_1\times s_2$. In such a case,
      by Lemma~\ref{lem:SNimpliespiSN}, $v$ is the product of two reducible
      terms, so since there is one projection less than in $K^\tau_{A\wedge B}$, the first induction hypothesis applies.
    \item $v\eq^*r'\times s_1$, with $s'\eq^*s_1\times
      s_2$. In such a case,
      by Lemma~\ref{lem:SNimpliespiSN}, $v$ is the product of two reducible terms, so since there is one projection less than in $K^\tau_{A\wedge B}$, the first induction hypothesis applies.
    \item $v\eq^*r_1\times s'$, with $r'\eq^*r_1\times
      r_2$. In such a case,
      by Lemma~\ref{lem:SNimpliespiSN}, $v$ is the product of two reducible terms, so since there is one projection less than in $K^\tau_{A\wedge B}$, the first induction hypothesis applies.
    \item $v\eq^*r'$, in which case, $C\equiv A$, and since $r\in\interp A$, we have $K'^\tau_A[r']\eq^*K'^\tau_A[v]\in\SN$.
    \item $v\eq^*r_1$ with $r'\eq^*r_1\times r_2$, in which case, since $r\in\interp A$, we have $K'^\tau_C[\pi_C(r')]\in\SN$ and $K'^\tau_C[\pi_C(r')]\toreq K'^\tau_C[v]$ hence $K'^\tau_C[v]\in\SN$.
    \item $v\eq^*s'$, in which case, $C\equiv B$, and since $s\in\interp B$, we have $K'^\tau_B[s']\eq^*K'^\tau_B[v]\in\SN$.
    \item $v\eq^*s_1$ with $s'\eq^*s_1\times s_2$, in which case, since $s\in\interp B$, we have $K'^\tau_C[\pi_C(s')]\in\SN$ and $K'^\tau_C[\pi_C(s')]\toreq K'^\tau_C[v]$ hence $K'^\tau_C[v]\in\SN$.
      \qedhere
    \end{itemize}
  \end{itemize}
\end{proof}

\xrecap{Lemma}{Adequacy of abstraction}{lem:lamOfInterp}{If for all $t\in\interp A$, $r[t/x]\in\interp B$, then $\lambda x^A.r\in\interp{A\Rightarrow B}$.}
\begin{proof}
  We proceed by induction on $M(r)$.
  
  If $r\eq^*r_1\times r_2$, by
  Lemma~\ref{lem:generation}, $B\equiv B_1\wedge B_2$ with $r_1:B_1$ and $r_2:B_2$.
  and so by Lemma~\ref{lem:substitution},
  $r_1[t/x]:B_1$ and $r_2[t/x]:B_2$.
  Since $r[t/x]\in\interp B$, we have $r_1[t/x]\times r_2[t/x]\in\interp{B}$.
  By
  Lemma~\ref{lem:SNimpliespiSN}, $r_1[t/x]\in\interp{B_1}$ and
  $r_2[t/x]\in\interp{B_2}$. By the induction hypothesis, $\lambda
  x^A.r_1\in\interp{A\Rightarrow B_1}$ and $\lambda
  x^A.r_2\in\interp{A\Rightarrow B_2}$, then by Lemma~\ref{lem:adequacyOfProd}, 
$\lambda
  x^A.r\eq^*\lambda x^A.r_1\times\lambda x^A.r_2\in\interp{(A\Rightarrow
    B_1)\wedge(A\Rightarrow B_2)}$, and by Lemma~\ref{lem:eqInterp},
  $\interp{(A\Rightarrow B_1)\wedge(A\Rightarrow B_2)}=\interp{A\Rightarrow B}$.

  If $r\not\eq^* r_1\times r_2$, we need to prove that for any elimination context $K^\tau_{A\Rightarrow B}$, we have
  $K^\tau_{A\Rightarrow B}[\lambda x^A. r]\in\SN$.
  
  Since $r$ and all the terms in $\T(K^\tau_{A\Rightarrow B})$ are reducible,
  then they are in $\SN$, by Lemma~\ref{lem:CR1}.
  We proceed by induction on $|K^\tau_{A\Rightarrow B}| + |r|$ to show that all
  the one step reducts of $K^\tau_{A\Rightarrow B}[\lambda x^A.r]$ are in $\SN$.
  Since $r$ is not a product, the only one step reducts are the following.
  \begin{itemize}
  \item If the reduction takes place in one of the terms in $\T(K^\tau_{A\Rightarrow B})$ or $r$, we apply the
    induction hypothesis.
  \item If $K^\tau_{A\Rightarrow B}[\lambda x^A.r] = K'^\tau_{B}[(\lambda x^A.r)
s]$ and it reduces to $K'^\tau_{B}[r[s/x]]$, as $r[s/x]\in\interp B$, we have
$K'^\tau_B[ r[s/x]]\in\SN$.
    \qedhere
  \end{itemize}
\end{proof}

\xrecap{Theorem}{Adequacy}{thm:adequacy}{If $\ve r:A$, then for all $\sigma$ adequate, we have $\sigma r\in\interp A$.}
\begin{proof}
  By induction on $r$. 
  \begin{itemize}
  \item If $r$ is a variable $x\in\V_A$, then, since $\sigma$ is adequate,
    we have $\sigma r\in\interp A$.
    
  \item If $r$ is a product $s\times t$, then by Lemma~\ref{lem:generation}, $s:B$, $t:C$, and $A\equiv
    B\wedge C$, then by the induction
    hypothesis, $\sigma s\in\interp B$ and $\sigma t\in\interp C$. By
    Lemma~\ref{lem:adequacyOfProd}, ${(\sigma s\times\sigma t)}\in\interp{B\wedge C}$,
    hence by Lemma~\ref{lem:eqInterp}, $\sigma r\in\interp A$.
    
  \item If $r$ is a projection $\pi_A(s)$, then by Lemma~\ref{lem:generation},
$s:A\wedge B$, and by the induction hypothesis, $\sigma
s\in\interp{A\wedge B}$. By Lemma~\ref{lem:SNimpliespiSN}, $\pi_A(\sigma
s)\in\interp{A}$, hence $\sigma r\in\interp A$.
    
  \item If $r$ is an abstraction $\lambda x^B.s$, with $s:C$, then by
    Lemma~\ref{lem:generation}, $A\equiv
    B\Rightarrow C$, hence by the induction
    hypothesis, for all $\sigma$, and for all $t\in\interp B$,  $(\sigma s)[t/x]\in\interp C$.
    Hence, by Lemma~\ref{lem:lamOfInterp}, ${\lambda x^B.\sigma
      s}\in\interp{B\Rightarrow C}$, 
    hence, by Lemma~\ref{lem:eqInterp}, $\sigma r\in\interp A$.
    
  \item If $r$ is an application $st$, then by Lemma~\ref{lem:generation},
    $s:B\Rightarrow A$ and $t:B$, then by the induction hypothesis,
    $\sigma s\in\interp{B\Rightarrow A}$ and 
    $\sigma t\in\interp B$. Hence, by
    Lemma~\ref{lem:appOfInterp}, we have $\sigma r=\sigma s\sigma
    t\in\interp A$.
    \qedhere
  \end{itemize}
\end{proof}

\end{document}